\documentclass{llncs}
\usepackage{amssymb, amsmath}

\usepackage{verbatim}
\usepackage{graphicx}
\usepackage{subfigure}

\usepackage{color}
\usepackage[normalem]{ulem}

\newcommand{\oldversion}[1]{{}}

\newcommand{\remove}[1]{}

\newcommand{\NP}{\mbox{\bfseries NP}}

\newenvironment{proofof}[1]{\par\addvspace\topsep\noindent
{\bf Proof #1:} \ignorespaces }{\qed}

\usepackage[pagebackref=true,breaklinks=true]{hyperref}

\begin{document}

\title{Drawing Partially Embedded and\\ Simultaneously Planar Graphs\thanks{A preliminary version of this paper appeared in~\cite{cfglms-dpespg-14}.}}

\author{ Timothy~M.~Chan\inst{1}
\and Fabrizio Frati\inst{2}
    \and Carsten~Gutwenger\inst{3}
    \and Anna~Lubiw\inst{1}
    \and Petra~Mutzel\inst{3}
    \and Marcus~Schaefer\inst{4}
}

\institute{Cheriton School of Computer Science, University of Waterloo, Canada.\\
\email{\{tmchan, alubiw\}@uwaterloo.ca}
\and
School of Information Technologies, The University of Sydney, Australia.\\
\email{fabrizio.frati@sydney.adu.au}
\and
Technische Universit\"{a}t Dortmund, Dortmund, Germany.\\
\email{\{carsten.gutwenger, petra.mutzel\}@tu-dortmund.de}
\and DePaul University, Chicago, Illinois, USA.\\
\email{mschaefer@cdm.depaul.edu}
}


\maketitle

\begin{abstract}
We investigate the problem of constructing planar drawings with few bends for two related problems, the \emph{partially embedded graph} problem---to extend a straight-line planar drawing of a subgraph to a planar drawing of the whole graph---and the \emph{simultaneous planarity} problem---to find planar drawings of two graphs that coincide on shared vertices and edges. In both cases we show that if the required planar drawings exist, then there are planar drawings with a linear number of bends per edge and, in the case of simultaneous planarity, a constant number of crossings between every pair of edges.  Our proofs provide efficient algorithms if the combinatorial embedding of the drawing is given. Our result on partially embedded graph drawing  generalizes a classic result by Pach and Wenger which shows that any planar graph can be drawn with a linear number of bends per edge if the location of each vertex is fixed.
 \end{abstract}

\section{Introduction}
\label{sec:intro}

In many practical applications we wish to draw a planar graph while satisfying some geometric or topological constraints. One natural situation is that we have a drawing of part of the graph and wish to extend it to a planar drawing of the whole graph. Pach and Wenger~\cite{PW01} considered a special case of this problem.  They showed that any planar graph can be drawn with its vertices lying at pre-assigned points in the plane and with a linear number of bends per edge. In this case the pre-drawn subgraph has no edges.

If the pre-drawn subgraph $H$ has edges, a planar drawing of the whole graph $G$ extending the given drawing $\cal H$ of $H$ may not exist. Angelini et al.~\cite{ABFJKPR10} gave a linear-time algorithm for the corresponding decision problem; the algorithm returns, for a positive answer, a planar embedding of $G$ that {\em extends} that of $\cal H$ (i.e., if we restrict the embedding of $G$ to the edges and vertices of $H$, we obtain the embedding corresponding to $\cal H$). If one does not care about maintaining the actual planar drawing of $H$ this is the end of the story, since standard methods can be used to find a straight-line planar drawing of $G$ in which the drawing of $H$ is topologically equivalent to the one of $\cal H$. In this paper we show how to draw $G$ while preserving the actual drawing $\cal H$ of $H$, so that each edge has a linear number of bends. This bound is worst-case optimal, as proved by Pach and Wenger~\cite{PW01} in the special case in which $H$ has no edges.

A result analogous to ours was claimed by Fowler et al.~\cite{fjks-crp-11} for the special case in which $H$ has the same vertex set as $G$. Their algorithm draws the edges of $G$ one by one, in any order so that edges connecting distinct connected components of $H$ precede edges within the same connected component of $H$; each edge is drawn as a curve with the minimum number of bends. Fowler et al. claim that their algorithm constructs drawings with a linear number of bends per edge. However, we prove that there exists a tree, a planar drawing of its vertex set, and an order of the edges of the tree, such that drawing the edges in the given order as curves with the minimum number of bends results in some edges having an exponential number of bends.

The second graph drawing problem we consider is the \emph{simultaneous planarity} problem~\cite{BKR?}, also known as ``simultaneous embedding with fixed edges (SEFE)''.  The SEFE problem is strongly related to the partially embedded graph problem and---in a sense we will make precise later---generalizes it. We are given two planar graphs $G_1$ and $G_2$ that share a \emph{common subgraph} $G$ (i.e., $G$ is composed of those vertices and edges that belong to both $G_1$ and $G_2$). We wish to find a \emph{simultaneously planar drawing}, i.e., a planar drawing of $G_1$ and a planar drawing of $G_2$ that coincide on $G$. Graphs $G_1$ and $G_2$ are \emph{simultaneously planar} if they admit such a drawing. Both $G_1$ and $G_2$ may have {\em private} edges that are not part of $G$. In a simultaneous planar drawing the private edges of $G_1$ may cross the private edges of $G_2$. The simultaneous planarity problem arises in information visualization when we wish to display two relationships on two overlapping element sets.

The decision version of the simultaneous planarity problem is not known to be \NP-complete, or to be solvable in polynomial time, though it is known to be \NP-complete if more than two graphs are given~\cite{GJPSS06}. However, there is a combinatorial characterization of simultaneous planarity, based on the concept of a ``compatible embedding'', due to J{\"u}nger and Schulz~\cite{JS} (see below for details). Erten and Kobourov~\cite{EK}, who first introduced the problem, gave an efficient drawing algorithm for the special case where the two graphs share vertices but no edges.  In this case, a simultaneous planar drawing on a polynomial-size grid always exists in which each edge has at most two bends and therefore any two edges cross at most nine times, see~\cite{dl-seogpc-07,EK,k-setbepa-06}. In this paper we show that if two graphs have a simultaneous planar drawing, then there is a drawing on a polynomial-size grid in which every edge has a linear number of bends and in which any two edges cross at most 24 times. Our result is algorithmic, assuming a compatible embedding is given.

\subsection{Realizability Results}

Our paper addresses the following two drawing problems:

\begin{description}

\item[{\bf Planarity of a partially embedded graph (PEG).}]
Given a planar graph $G$ and a straight-line planar drawing ${\cal H}$ of a subgraph $H$ of $G$,
find a planar drawing of $G$ that extends ${\cal H}$ (see~\cite{ABFJKPR10,JKR13}).

\item[{\bf Simultaneous planarity (SEFE).}]
Given two planar graphs $G_1$ and $G_2$ that share a subgraph $G$,
find a simultaneous planar drawing of $G_1$ and $G_2$ (see~\cite{BKR?}).
\end{description}

We prove the following results:

\begin{theorem}[Realizing a Partially Embedded Graph]\label{thm:PWG}
Let $G$ be an $n$-vertex planar graph, let $H$ be a subgraph of $G$, and let $\cal H$ be a straight-line planar drawing of $H$. Suppose that $G$ has a planar embedding $\cal E$ that extends the one of $\cal H$.
Then we can construct a planar drawing of $G$ in $O(n^2)$-time which realizes $\cal E$, extends $\cal H$, and has at most $72 |V(H)|$ bends per edge.
 \end{theorem}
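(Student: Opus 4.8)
The plan is to exploit the fact that, in the given embedding $\mathcal{E}$, the drawing $\mathcal{H}$ acts as a rigid obstacle that decouples the problem face by face. First I would observe that every edge of $G$ not in $H$ must, in any planar drawing realizing $\mathcal{E}$ and extending $\mathcal{H}$, lie inside a single face of $\mathcal{H}$: since the drawing is planar and contains the fixed drawing of $H$, no edge can cross $H$, so each non-$H$ edge stays within the face of $\mathcal{H}$ in which it is embedded by $\mathcal{E}$, with its endpoints lying in the closure of that face. Likewise each vertex of $G \setminus V(H)$ lies in the interior of one face, while at a vertex $u \in V(H)$ the $H$-edges partition the rotation into angular sectors, one per incident face, so the $G$-edges at $u$ can be sorted into their correct sectors without conflict. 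Consequently the instance splits into independent sub-instances, one per face $R$ of $\mathcal{H}$, whose boundary vertices are fixed and whose interior vertices are free to place. Since each edge of $H$ bounds at most two faces, the total boundary complexity over all faces is $O(|V(H)|)$, so a bound that is linear in the complexity of a single face is automatically linear in $|V(H)|$.

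It then suffices to solve the following per-face problem: inside a polygonal region $R$ with prescribed boundary vertices, draw the planar sub-instance of $G$ embedded in $R$ (interior vertices free), realizing the prescribed rotation system and using $O(|V(H)|)$ bends per edge. To tame the geometry of $R$, I would first normalize it: add a large bounding triangle to handle the unbounded face, cut any holes by auxiliary bridges to make $R$ simply connected, and triangulate the resulting simple polygon. This produces a triangular scaffold $\mathcal{T}$ refining $R$ with $O(|V(H)|)$ cells, which serves as a coordinate system for routing but is not part of the output drawing. Each free vertex is then assigned to the scaffold cell dictated by $\mathcal{E}$ and perturbed to a distinct grid point.

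The core of the argument is the routing, for which I would adapt the Pach--Wenger technique. After augmenting the sub-instance to a graph admitting a spanning ``backbone'' (for instance by a standard subhamiltonian augmentation that respects $\mathcal{E}$, possibly introducing subdivision vertices), I would thread a single curve through the scaffold $\mathcal{T}$ that visits all the sub-instance vertices in an order compatible with the rotation system, and then draw every edge inside a thin tube following this backbone, as a monotone staircase. Because the backbone passes through only $O(|V(H)|)$ triangles and each traversal of a triangle contributes a bounded number of bends, each edge receives $O(|V(H)|)$ bends; a careful accounting of the bends charged per triangle yields the explicit bound $72\,|V(H)|$. Routing all edges along a common backbone, together with the monotonicity, guarantees that the polylines are pairwise non-crossing and consistent with $\mathcal{E}$.

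I expect the core routing step to be the main obstacle, since it must simultaneously honor three constraints that pull in different directions: the fixed, possibly highly non-convex geometry inherited from $\mathcal{H}$; the prescribed combinatorial embedding $\mathcal{E}$; and planarity of the polyline edges while keeping the bend count linear. The bounding-box, hole-cutting, and triangulation steps, the scaffold-guided placement, and the ordering all run in near-linear time, and drawing $O(n)$ edges each with $O(|V(H)|) = O(n)$ bends on a grid accounts for the $O(n^2)$ running time as well as the overall bend bound. When $H$ has no edges this reduces exactly to the setting of Pach and Wenger, recovering their $O(n)$-bend result as the special case $|V(H)| = n$.
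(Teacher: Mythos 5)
Your overall plan (decompose face by face, triangulate a scaffold inside each face, route edges Pach--Wenger style along a backbone) matches the skeleton of the paper's proof, but it is missing the two ideas that actually make the bound $O(|V(H)|)$ achievable, and the step you yourself flag as ``the main obstacle'' is precisely where the whole proof lives. The facial walks of $\mathcal{H}$ bounding a face are geometric obstacles with the fixed vertices sitting \emph{on} them, and the edges of $G$ must attach to those vertices in the prescribed rotation order without crossing the pre-drawn edges of $H$. The paper resolves this by contracting each facial walk $W_i$ to a single vertex $v_i$ placed near $W_i$, applying a tree-based generalization of Pach--Wenger (Lemma~\ref{cor:Te}) to the contracted multigraph --- crucially, that lemma guarantees each edge comes close to each fixed vertex at most \emph{six} times --- and then un-contracting: edges incident to $v_i$ are routed around $W_i$ to their true endpoints inside a buffer zone between the face $F$ and an inner $\varepsilon$-approximation $F'$, while edges that merely pass by $v_i$ (and would otherwise block access to $W_i$) are first wrapped entirely around $W_i$ in a second buffer between $F'$ and a finer approximation $F''$. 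The ``six approaches'' bound is exactly what caps the wrap-around cost at $6(|W_i|+4)$ bends per walk per edge and yields the constant $72$; nothing in your sketch supplies a substitute for this accounting, nor even an argument that a ``thin tube / monotone staircase'' routing can reach fixed endpoints scattered along a non-convex facial walk in the correct cyclic order without crossings.

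Second, your treatment of the free (interior) vertices --- pre-assigning each one to a scaffold cell and perturbing it to a grid point \emph{before} routing --- breaks the claimed bound: once all $n$ vertices have fixed locations you are back in the plain Pach--Wenger setting, where the backbone must visit all $n$ points and the bend count per edge becomes $\Theta(n)$, not $O(|V(H)|)$ (recall $|V(H)|$ may be far smaller than $n$). The paper instead contracts every Hamiltonian-cycle edge having a free endpoint, so that the routing involves only the fixed vertices, and afterwards re-introduces each contracted cluster by surgery inside a small disk (a straight-line convex drawing of a $2$-connected, internally triangulated auxiliary graph $G^+_u$). Without these two mechanisms --- the contraction/un-contraction of facial walks with the buffer zones, and the contraction of free vertices into the cycle --- your proposal is a restatement of the goal plus the known machinery, and neither the linearity in $|V(H)|$ nor the constant $72$ can be extracted from it.
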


Theorem~\ref{thm:PWG} generalizes Pach and Wenger's classic result, which corresponds to the special case in which the pre-drawn subgraph has no edges.

\begin{theorem}[Realizing a Simultaneous Planar Embedding]
\label{thm:sim-draw}
Let $G_1$ and $G_2$ be simultaneously planar graphs on a total of $n$ vertices with a shared subgraph $G$.
If we are given a compatible embedding of the two graphs, then we can construct in $O(n^2)$ time a drawing that realizes the compatible embedding, and in which
any private edge of $G_1$ and any private edge of $G_2$
intersect at most $24$ times. In addition, we can ensure either one of the following two properties:

\begin{enumerate}
\item[$(i)$] each edge of $G$ is straight, and each private edge of $G_1$ and of $G_2$ has at most $72n$ bends; also, vertices, bends, and crossings lie on an $O(n^2) \times O(n^2)$ grid; or
\item[$(ii)$] each edge of $G_1$ is straight and each private edge of $G_2$ has at most $72 |V(G_1)|$ bends per edge.
\end{enumerate}
\end{theorem}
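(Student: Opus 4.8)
The plan is to reduce the construction to the partially embedded graph result (Theorem~\ref{thm:PWG}), drawing the shared graph $G$ once and extending it to $G_1$ and $G_2$, and then to bound the crossings between private edges by localizing each private edge to a single face of $G$.

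First I would settle the bends, the running time, and the grid. For property $(i)$ I would compute a straight-line planar drawing of $G$ alone, realizing the embedding that the compatible embedding induces on $G$, on an $O(n)\times O(n)$ grid. Treating this as a straight-line drawing $\mathcal H$ of the subgraph $G$, I would apply Theorem~\ref{thm:PWG} twice: once with whole graph $G_1$ and $H=G$, and once with whole graph $G_2$ and $H=G$. The compatible embedding supplies, in each case, a planar embedding of $G_i$ that extends the embedding of $\mathcal H$, so the hypotheses of Theorem~\ref{thm:PWG} are met. Each application runs in $O(n^2)$ time, keeps the edges of $G$ straight, stays on a polynomial grid, and draws each private edge with at most $72|V(G)|\le 72n$ bends. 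Because both applications preserve $\mathcal H$, the two resulting drawings coincide exactly on $G$, so together they realize the compatible embedding. For property $(ii)$ I would instead first compute a straight-line planar drawing of all of $G_1$ (which in particular draws every edge of $G$ straight) and then apply Theorem~\ref{thm:PWG} only once, with whole graph $G_2$ and $H=G$, obtaining the private edges of $G_2$ with at most $72|V(G)|\le 72|V(G_1)|$ bends while leaving $G_1$ straight.

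The crux is the crossing bound, and the key simplifying observation is a localization to the faces of $G$. In each of the two drawings the graph $G_i$ is drawn planarly and extends $\mathcal H$; hence a private edge of $G_i$ neither crosses an edge of $G$ nor passes through a vertex of $G$, so its interior lies entirely inside a single face of the drawing of $G$. Consequently a private edge $e_1$ of $G_1$ and a private edge $e_2$ of $G_2$ can cross only if they lie in the \emph{same} face $F$ of $G$, and it suffices to bound, inside each face, the number of crossings between any single $G_1$-private edge and any single $G_2$-private edge. Inside a fixed face $F$ the only shared structure is the set of vertices of $G$ on $\partial F$ (there are no shared edges in the open face), so this is a simultaneous-drawing problem inside a disk for two graphs that share only boundary vertices, in the spirit of the shared-vertices case discussed in the introduction.

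The hardest part will be establishing the constant bound of $24$ crossings per pair from this local picture, since the routing of Theorem~\ref{thm:PWG} gives each private edge up to $\Theta(n)$ bends, and two such polylines could \emph{a priori} cross $\Theta(n)$ times. To control this I would not use Theorem~\ref{thm:PWG} purely as a black box but route the private edges of both graphs inside each face with respect to a common reference frame, so that every private edge is monotone along one shared direction (or along a common spine built from $\partial F$). Two edges that are each monotone with respect to the same reference cross only in a controlled, transversal way; I would count how many times a $G_1$-edge and a $G_2$-edge can switch sides of this reference and convert that into the explicit constant. Verifying that this coordinated routing is simultaneously consistent with both extensions, and chasing the constants through the construction to reach exactly $24$, is the main technical obstacle; the appearance of $72=3\cdot 24$ in the bend bound suggests that the same monotone routing drives both constants.
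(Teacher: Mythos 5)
Your reduction handles the bend bounds correctly: for property $(ii)$, drawing $G_1$ straight-line and applying Theorem~\ref{thm:PWG} with $H=G$ gives private $G_2$-edges with at most $72|V(G)|\le 72|V(G_1)|$ bends, which is exactly the ``weak form'' the paper itself points out after Theorem~\ref{thm:PWG}. But the heart of Theorem~\ref{thm:sim-draw} is the constant bound of $24$ crossings per pair of private edges, and that is precisely where your proposal has a genuine gap --- one you acknowledge but do not close, and the mechanism you sketch for closing it does not work. Routing both families of private edges ``monotone along one shared direction or along a common spine'' does not bound pairwise crossings: two polygonal curves that are each monotone with respect to the same direction can still cross each other $\Theta(n)$ times, so monotonicity alone converts nothing into a constant. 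Moreover, in the setting of property $(ii)$ you have no freedom to re-route the private edges of $G_1$ at all: they must be straight segments, so any ``coordinated'' routing can only act on the $G_2$ side.

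The paper's actual solution is to modify the proof of Theorem~\ref{thm:PWG} rather than invoke it as a black box, and the key idea is to make the private edges of $G_1$ part of the scaffolding that guides the routing of the private edges of $G_2$. Concretely, inside each face $F$ of $G$ the triangulation $\Sigma$ of (an inner approximation of) $F$ is built using the vertices and straight private edges of $G_1$ lying in $F$, plus extra edges; the $G_2$-private edges are then routed $\varepsilon$-close to a spanning tree $T$ of the dual of $\Sigma$ via Lemma~\ref{cor:Te}. With this setup, a private edge $e$ of $G_2$ can meet a private edge $e'$ of $G_1$ (an edge of $\Sigma$) only in two kinds of places: near the contracted vertices $v_i,v_j$ corresponding to the facial walks containing the endpoints of $e'$, and near the single point $p_T$ where the dual edge of $e'$ crosses $e'$. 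The first kind is bounded by the clause in Lemma~\ref{cor:Te} that each edge comes close to each fixed vertex at most six times (giving at most $12$ crossings), and the second kind is bounded by dissecting $e$ into at most three subdivision edges and tracking the Pach--Wenger curves $\Delta'_{\cdot,k}$, $\Delta''_{\cdot,k}$, each of which crosses $e'$ at most twice (giving at most $12$ more), for the total of $24$. Without this integration of $G_1$'s edges into the routing structure, nothing localizes the crossings, which is why your black-box approach stalls. A secondary point: your claim that Theorem~\ref{thm:PWG} ``stays on a polynomial grid'' is unsupported --- that theorem makes no grid guarantee; the paper obtains property $(i)$ from property $(ii)$ by placing dummy vertices at the $O(n^2)$ crossings and redrawing the planarized graph straight-line on a grid, which is also where the $72n$ bound in $(i)$ comes from (fewer than $3n$ crossed edges times $24$ crossings each).
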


 Theorem~\ref{thm:PWG} provides a weak form of Theorem~\ref{thm:sim-draw}: If $G_1$ and $G_2$ are simultaneously planar, they admit a compatible embedding. Take any straight-line planar drawing of $G_1$ realizing that embedding and extend the induced drawing of $G$ to a drawing of $G_2$. By Theorem~\ref{thm:PWG}, we obtain a simultaneous planar drawing where each edge of $G_1$ is straight and each private edge of $G_2$ has at most $72|V(G_1)|$ bends per edge. Our stronger result of 24 crossings between any two edges is  obtained by modifying the proof  of Theorem~\ref{thm:PWG}, rather than applying that result directly.

Grilli et al.~\cite{Grilli2014} very recently and independently proved a result in some respect stronger than Theorem~\ref{thm:sim-draw}. They showed that two simultaneously planar graphs have a simultaneous planar drawing with at most $9$ bends per edge, vastly better than our $72n$ bound. On the other hand, our bound of $24$ crossings per pair of edges is better than the bound of $100$ that can be derived from their result. Also, our algorithm allows us to construct simultaneous planar drawings in which each edge of one graph is straight or in which vertices, bends, and crossings lie on a polynomial-size grid. The former feature is not achievable by means of Grilli et al.'s algorithm; the latter one could be obtained from Grilli et al.'s result, at the expense of increasing the number of bends per edge to $300n$ (which corresponds to the number of crossings on a single private edge).

\subsection{Related Work}

The decision version of simultaneous planarity generalizes partially embedded planarity: given an instance $(G,H,\cal H)$ of the latter problem, we can augment $\cal H$ to a drawing of a $3$-connected graph $G_1$ and let $G_2 = G$.  Then $G_1$ and $G_2$ are simultaneously planar if and only if $G$ has a planar embedding extending $\cal H$. In the other direction, the algorithm~\cite{ABFJKPR10} for testing planarity of partially embedded graphs solves the special case of the simultaneous planarity problem in which the embedding of the common graph $G$ is fixed (which happens, e.g., if $G$ or one of the two graphs is $3$-connected).

Several optimization versions of partially embedded planarity and simultaneous planarity are \NP-hard. Patrignani showed that testing whether there is a straight-line drawing of a planar graph $G$ extending a given drawing of a subgraph of $G$ is \NP-complete~\cite{P06}, so bend minimization in partial embedding extensions is \NP-complete; Patrignani's result holds  even if a combinatorial embedding of $G$ is given.\footnote{Patrignani does not explicitly claim \NP-completeness in the case in which the embedding of $G$ is fixed, but that can be concluded by checking his construction; only the variable gadget, pictured in his Figure~3, needs minor adjustments.}
Bend minimization in simultaneous planar drawings is \NP-hard, since it is \NP-hard to decide whether there is a straight-line simultaneous drawing~\cite{EGJPSS08}.
Crossing minimization in simultaneous planar drawings is also \NP-hard, as follows from an \NP-hardness result on \emph{anchored planar drawings} by Cabello and Mohar~\cite{Cabello-Mohar}; see Theorem~\ref{thm:SPC} in Section~\ref{sec:S2} for a slightly stronger result.

Di Giacomo et al.~\cite{ddlmw-psetgpd-09} studied the special case of PEG in which the $n$-vertex graph $G$ to be drawn is a tree. They showed that, given a drawing $\cal H$ of a subtree $H$ of $G$, a drawing of $G$ extending $\cal H$ can be computed in $O(n^2\log n)$ time so that each edge of $G$ has at most $1+2\lceil{|V(H)|/2}\rceil$ bends.

Further, as mentioned above, the special cases of PEG and SEFE in which there are no edges in the pre-drawn subgraph and in the common subgraph have been already studied.

Concerning PEG, Pach and Wenger~\cite{PW01} proved the following result: given an $n$-vertex planar graph $G$ with fixed vertex locations, a planar drawing of $G$ in which each edge has at most $120n$ bends can be constructed in $O(n^2)$ time. They also proved that such a bound is asymptotically tight in the worst case.
Regarding the constant, Badent et al.~\cite{bgl-dcgcps-08} improved the bound to $3n+2$ bends per edge.
Biedl and Floderus~\cite{Biedl-Floderus} considered the more general problem of drawing an $n$-vertex planar graph on fixed vertex locations where the drawing is constrained to lie inside a $k$-vertex polygon.   They show that there is a drawing with $O(n+k)$ bends per edge.

Concerning SEFE, Di Giacomo and Liotta~\cite{dl-seogpc-07} and independently Kammer~\cite{k-setbepa-06} proved the following result: given two planar graphs $G_1$ and $G_2$ sharing some vertices and no edge with a total number of $n$ vertices, there exists an $O(n)$-time algorithm to construct a simultaneous planar drawing of $G_1$ and $G_2$ on a grid of size $O(n^2) \times O(n^2)$, where each edge has at most $2$ bends, hence there are at most $9$ crossings between any edge of $G_1$ and any edge of $G_2$. This improves upon a previous result of Erten and Kobourov~\cite{EK}. The algorithms in~\cite{dl-seogpc-07,EK,k-setbepa-06} make use of a drawing technique introduced by Kaufmann and Wiese~\cite{KW}.

Haeupler et al.~\cite{HJL} showed that if two simultaneously planar graphs $G_1$ and $G_2$ share a subgraph $G$ that is connected, then there is a simultaneous planar drawing in which any edge of $G_1$ and any edge of $G_2$ intersect at most once. Introducing vertices at crossing points yields a planar graph, and a straight-line drawing of that graph provides a simultaneous planar drawing with $O(n)$ bends per edge, $O(n)$ crossings per edge, and with vertices, bends, and crossings on an $O(n^2) \times O(n^2)$ grid. Our result generalizes this to the case where the common graph $G$ is not necessarily connected.

\subsection{Graph Drawing Terminology}\label{sec:definitions}

A \emph{rotation system} for a graph is a cyclic ordering of the edges incident to each vertex.
A rotation system of a connected graph determines its {\em facial walks}---the closed walks in which each edge $(u,v)$ is followed by the next edge $(v,w)$ in the cyclic order at $v$.
The {\em size} $|W|$ of a facial walk $W$ is the number of vertices on $W$, where we count vertex repetitions.  (Note that a graph that consists of a single vertex has a single facial walk of size 1; for any other connected graph the size of a facial walk is equal to the number of edges in the facial walk, counting repetitions.)
A rotation system is \emph{planar} if Euler's formula holds, i.e., $n - m + f = 2$ where $n$ is the number of vertices, $m$ is the number of edges, and $f$ is the number of facial walks.
A \emph{planar embedding} of a graph consists of a planar rotation system together with a specified outer face.  A fundamental result about connected planar graphs is that every planar drawing corresponds to a planar embedding, and conversely, every planar embedding can be realized as a planar drawing (and, in fact, as a straight-line planar drawing by F\'ary's theorem).
Furthermore, facial walks correspond to faces in the drawing.


These definitions do not handle the combinatorics of a planar drawing of a disconnected graph---namely the definition of planar embedding as stated above does not tell us how connected components nest into each other.


Following  J{\"u}nger and Schulz~\cite{JS}, we
define a \emph{topological embedding} of a (possibly non-connected) graph as follows: We specify a planar embedding for each connected component.  This determines a set of inner faces. For each connected component we specify a ``containing'' face, which may be an inner face of some other component or the unique outer face.   Furthermore, we forbid cycles of containment---in other words, if a connected component is contained in an inner face, which is contained in a component, etc., then this chain of containments must lead eventually to the unique outer face.

A {\em facial boundary} in a topological embedding of a graph is the collection of facial walks along the (not necessarily connected) boundary of a face. Each face (unless it is the outer face) has a distinguished facial walk we call the {\em outer} facial walk separating the remaining {\em inner} facial walks from the outer face of the embedding. The {\em size} of a facial boundary is the sum of the sizes of its facial walks.

A \emph{compatible embedding} of two planar graphs $G_1$ and $G_2$ consists of topological embeddings of $G_1$ and $G_2$ such that the common subgraph $G$ inherits the same topological embedding from $G_1$ as from $G_2$ (where a subgraph inherits a topological embedding in a straightforward way; in particular, if we remove an edge that disconnects the graph, the face containment is determined by the edge that was removed). J{\"u}nger and Schulz~\cite{JS} proved that $G_1$ and $G_2$ are simultaneously planar if and only if they have a compatible embedding. For that proof, they construct a simultaneous planar drawing of $G_1$ and $G_2$ by extending a drawing of $G$ (thus proving a form of our Theorem~\ref{thm:PWG}). However, their method does not yield any bounds on the number of bends or crossings.

\section{Partially Embedded Graphs}\label{sec:PEG}

In this section we prove Theorem~\ref{thm:PWG}; that is, we show how to construct a planar drawing of $G$ that extends the planar straight-line drawing ${\cal H}$ and has a linear number of bends per edge assuming that we are given a planar embedding of $G$ extending $\cal H$. It is sufficient to prove the result for a single face $F$ of $\cal H$, since the embedding of $G$ is given, and we know for each vertex and edge of $G$ which face of ${\cal H}$ it lies in, so the drawings in different faces of $\cal H$ do not interfere with each other.

Pach and Wenger~\cite{PW01} proved their upper bound on the number of bends needed to draw a graph with fixed vertex locations by drawing a tree with its leaves at the fixed vertex locations, and ``routing'' all the edges close to the tree, sometimes crossing the tree but never crossing each other. We want to use their approach, but we have to deal with a more general problem. Instead of fixed vertex locations we have fixed facial boundaries. The solution is natural: We contract each facial walk $W_i$ of $F$ to a single vertex $v_i$, fix a position for vertex $v_i$ inside $F$ near $W_i$, and then apply the Pach-Wenger method to draw the contracted graph on the fixed vertex locations $v_i$. We ensure that the contracted graph is drawn inside $F$, indeed we
stay a small distance away
from the boundary of $F$, inside a polygonal region $F'$ that is an ``inner approximation'' of $F$. Inside $F'$ we draw a tree $T$ with its leaves $v_i$ at the fixed vertex locations, while suitably bounding the size of $T$ so as to get our bound on the number of bends. We then route the edges of the contracted graph close to $T$ as  Pach and Wenger do. Finally, to retrieve the original, uncontracted graph,  we route the edges incident to $v_i$ to their true endpoint on the facial boundary $W_i$---these routes use the empty buffer zone between $F$ and $F'$.

We fill in the details of this argument in Section~\ref{sec:PPWG}, but before doing so we introduce ``inner approximations'' in Section~\ref{sec:AF},
and formalize the tree argument in Section~\ref{sec:EPP}.

To simplify notation, we use $n_A$ and $m_A$ for the number of vertices and edges in a graph (or subgraph) $A$.

\subsection{Approximating Faces}\label{sec:AF}

In the drawing $\cal H$, the face $F$ is a region of the plane
homeomorphic to a disc with holes.  Each facial walk of $F$ appears in the drawing as a  {\em closed polygonal arc}, i.e.~a sequence of straight-line segments joined in a path that returns to its starting point (repeated segments/vertices may occur).
We will refer to a facial walk and its drawing interchangeably.

We will approximate $F$ by offsetting each of its facial walks into the interior of $F$.
See Figure~\ref{fig:approx}.
Let $W_1$ be the outer facial walk of $F$, and let $W_2, \ldots, W_b$ be the inner facial walks.
An {\em inner $\varepsilon$-approximation of $W_i$} is a simple polygon $P_i$ (a closed polygonal arc with no self-intersections) such that:
\begin{enumerate}
\item $P_i$  is $\varepsilon${\em-close} to  $W_i$, meaning that every point of $P_i$ is within distance $\epsilon$ of a point of $W_i$,
\item the inner facial walk $W_i$ lies in the interior of $P_i$, for each $2\leq i\leq b$, and
\item the outer facial walk $W_1$ lies in the exterior of $P_1$.
\end{enumerate}
If in addition the $P_i$'s form a {\em polygonal region} (a simple polygon with holes) with $P_1$ as the outer polygon, then we say that the polygonal region is an  {\em inner $\varepsilon$-approximation of $F$}.
\remove{The {\em Hausdorff distance} $d_H(A,B)$ of two sets (in a space with metric $d$) is defined as\footnote{The underlying metric $d$ can be Euclidean or some other appropriate metric.}\\  {\begin{center} $\max\left\{\sup_{a \in A} \inf_{b \in B} d(a,b), \sup_{b \in B} \inf_{a \in A} d(a,b)\right\}$.\end{center}} Intuitively, the Hausdorff distance measures how far a point in one set can be from the other set. Sets $A$ and $B$ are {\em $\varepsilon$-close} if $d_H(A,B) < \varepsilon$. Then $A$ is an {\em inner $\varepsilon$-approximation of $B$} if they are $\varepsilon$-close and there is a $\delta>0$ so that all the points $\delta$-close to $A$ are a subset of $B$.
}
The next lemma shows that we can build inner $\varepsilon$-approximations of $F$. 

\begin{lemma}\label{cor:fw}
For any $\varepsilon>0$ we can efficiently construct an inner $\varepsilon$-approximation $F'$ of $F$.
\end{lemma}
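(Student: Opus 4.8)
The plan is to obtain $F'$ by pushing each facial walk of $F$ a small distance into the interior of $F$. The first step is to fix this distance $\delta$. Let $\mu$ be the minimum \emph{feature size} of $\mathcal{H}$ inside the closure of $F$, i.e.\ the smallest distance occurring between two vertices, between a vertex and a non-incident edge, or between two disjoint edges of the walks bounding $F$; and let $\alpha$ be the smallest angle formed at a vertex by two \emph{distinct} consecutive edges of a facial walk. I would choose $\delta$ strictly smaller than $\varepsilon$ and than a suitable constant fraction of $\mu\sin\alpha$. The point of this choice is \emph{locality}: the part of the offset curve generated near one edge or vertex of a walk can never reach the part generated near a feature that is not incident to it, and the extra factor $\sin\alpha$ keeps the corner points produced at sharp convex corners (where they lie far along the angle bisector) within the feature-size-safe zone.

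Next I would offset a single walk $W_i$. Orient $W_i$ so that $F$ lies to its left throughout the traversal, which is possible because $F$ lies on one fixed side of $W_i$. For every dart $(u,v)$ of the walk, draw the segment obtained by translating $uv$ by $\delta$ to its left, so that the segment lies in $F$. These offset segments are then joined at the vertices of $W_i$, where each visit of the walk to a vertex is treated as a separate corner (a vertex visited $k$ times yields $k$ corners, one per consecutive pair of darts in the rotation at that vertex). Where the boundary of $F$ turns convexly, the two incident offset segments, extended, meet in a point and we use that point; where it turns reflexively we bridge the gap between them by a short segment at distance $\delta$; and a \emph{spike} (a dangling edge traversed out and back) is the extreme reflex case, handled by running up one side of the edge at distance $\delta$, turning around its tip at distance $\delta$, and coming back down the other side, so that the spike is cut off. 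The resulting closed polygonal arc $P_i$ has every point within $\delta\le\varepsilon$ of $W_i$, giving Property~1, and it lies in $F$ because we always offset to the $F$-side; consequently $P_i$ encircles $W_i$ when $i\ge 2$ (Property~2) and lies inside $W_1$ when $i=1$ (Property~3).

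The crucial point to verify is that each $P_i$ is actually a \emph{simple} polygon, and this is where the \textbf{main obstacle} lies, since a facial walk is in general far from simple: it may revisit vertices and traverse dangling edges twice. Two offset pieces can cross only if they come either from non-incident features or from incident ones. The former is excluded by the choice of $\delta$ below the feature size, since any two such pieces stay more than $2\delta$ apart; the latter is excluded because the local stitching described at each corner and spike is non-self-intersecting by construction. Hence $P_i$ is simple.

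Finally I would assemble the $P_i$ into the region $F'$. Each $P_i$ with $i\ge 2$ lies in $F$ and therefore inside $P_1$, and the same feature-size bound makes the $P_i$ pairwise disjoint, so they form a simple polygon with holes whose outer polygon is $P_1$; by definition this is an inner $\varepsilon$-approximation of $F$. Since the construction sweeps each dart of each facial walk once and computes $O(1)$ intersections per corner, it runs in time linear in $\sum_i|W_i|$, and is therefore efficient.
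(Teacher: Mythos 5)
Your construction is essentially the Minkowski-sum/offset approach that the paper explicitly mentions (and attributes to Pach and Wenger) but deliberately does \emph{not} use: you translate each edge of the walk parallel into $F$ by $\delta$ and then stitch the pieces together with a case analysis at corners (convex: intersect; reflex: bridge with a chord; spike: wrap around the tip). The paper's Lemma~\ref{lem:fw} instead places a single point on the angle bisector at each \emph{corner} of the walk, at distance $\varepsilon$ into $F$, and joins consecutive such points directly; this needs no corner case analysis, automatically treats a spike tip as just another corner (its bisector points along the edge's extension), and yields a polygon with at most $\max\{3,|W|\}$ vertices, versus roughly $2|W|$ for yours. That sharper vertex count is irrelevant to the statement of Lemma~\ref{cor:fw} itself, but it is what the paper feeds into the bend-count bookkeeping in the proof of Theorem~\ref{thm:PWG} ($|F_i'|\le|W_i|+1$), and the same bisector construction also delivers the \emph{nested} families of approximations needed later in Lemma~\ref{lem:HCemb}; so your route proves this lemma but would worsen the constants and would need to be redone elsewhere in the paper. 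Your globalization step (feature-size bound $\mu$, with the $\sin\alpha$ correction for sharp convex corners, to get simplicity of each $P_i$ and pairwise disjointness) matches the paper's condition that $\varepsilon$ be less than half the distance between any two non-adjacent vertices or edges of $\mathcal{H}$, and is fine.

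There is, however, one concrete missing case: facial walks of size $1$, i.e.\ isolated vertices of $H$ lying in $F$. Such a walk has no darts at all, so your offsetting procedure produces \emph{no} polygon for it, yet the definition of an inner $\varepsilon$-approximation of $F$ requires a simple polygon $P_i$ for every facial walk, and isolated vertices genuinely occur (the proof of Theorem~\ref{thm:PWG} explicitly maintains the index set $I=\{i:|W_i|=1\}$ of such walks). The fix is trivial---the paper surrounds an isolated vertex (and, in its construction, also an isolated edge, which your spike rule already covers) by a small triangle---but as written your construction simply does not terminate with an approximation of $F$ whenever $H$ has an isolated vertex inside $F$. A second, purely quantitative slip: the cap vertices you create when turning around a spike tip, and the intersection points at convex corners, lie at distance slightly more than $\delta$ from the walk (up to $\sqrt{2}\,\delta$ for a two-segment cap), so the claim that every point of $P_i$ is within $\delta\le\varepsilon$ of $W_i$ needs $\delta$ to be taken a constant factor smaller than $\varepsilon$, not merely $\delta<\varepsilon$.
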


\begin{figure}[tb]
\centering
\includegraphics[width=0.8\textwidth]{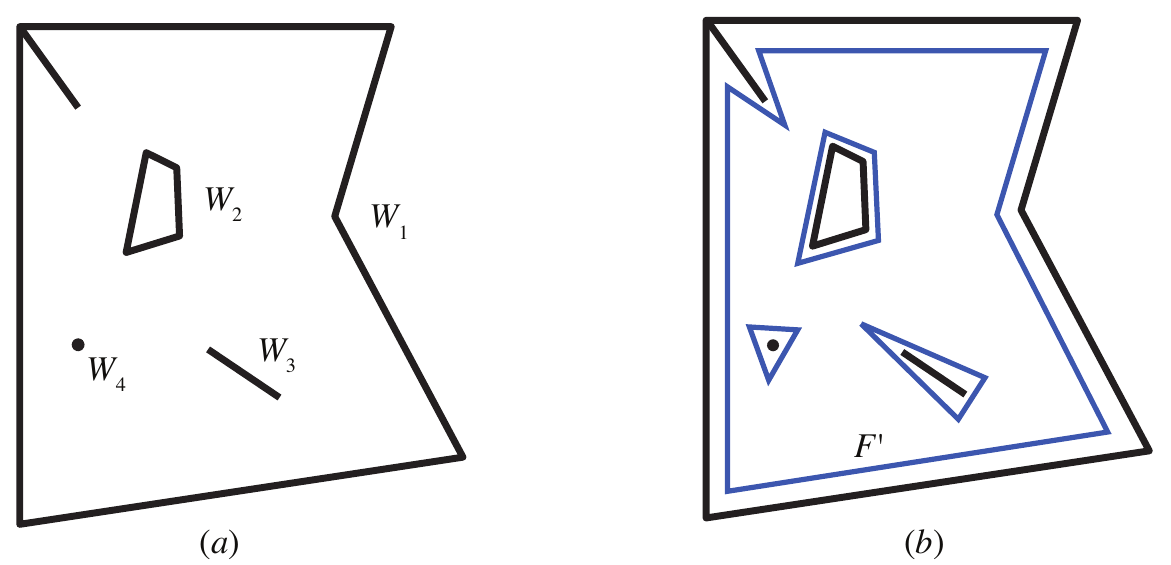}
\caption{(\emph{a}) A face $F$ with outer facial walk $W_1$ and inner facial walks $W_2, W_3, W_4$. (\emph{b}) An inner approximation $F'$ (heavy blue lines).}
\label{fig:approx}
\end{figure}

See Figure~\ref{fig:approx} for an illustration of Lemma~\ref{cor:fw}. To prove the lemma, we construct---for every sufficiently small $\varepsilon > 0$ and for every facial walk of $F$---an
inner ${\varepsilon}$-approximating polygon
$P_{\varepsilon}$ which
does not have too many bends, and so that the $P_{\varepsilon}$ are
{\em nested}
in the following sense: if $0 < \varepsilon' < \varepsilon$, then
$P_{\varepsilon'}$ lies in the interior of
$P_{\varepsilon}$ if $F$ is an inner face, and vice versa otherwise.
There are various ways to achieve this. Pach and Wenger~\cite{PW01} use the Minkowski sum of the facial walk (in their case the facial walk of a tree) and a square diamond centered at $0$.  We use a slightly different construction, because it seems easier (both computationally and conceptually) and it gives a slightly better bound on the number of bends (which is what we are most interested in): for the facial walk of an $n$-vertex tree, Pach and Wenger construct a
polygon with $4n-2$ vertices, while ours  
have $2n-2$ vertices. Our construction does have one disadvantage: the resulting drawings are tight, placing elements close together, for sharp (acute or obtuse) angles (the Minkowski-sum construction has the same problem for highly obtuse angles only).

\begin{lemma}\label{lem:fw}
 Let $W$ be a facial walk in a face $F$ of a drawing of a graph $G$ in the plane. We can efficiently construct a
 nested family of inner ${\varepsilon}$-approximating polygons
 $P_{\varepsilon}$ so that
 each $P_{\varepsilon}$ has at most $\max\{3,|W|\}$ vertices.
\end{lemma}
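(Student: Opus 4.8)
The plan is to build $P_{\varepsilon}$ as a \emph{mitered inward offset} of the closed polygonal arc $W$ toward the interior of the face $F$, contributing exactly one vertex per corner of $W$. Traverse $W$ with the orientation that keeps $F$ on its left, and let $p_0,p_1,\ldots,p_{k-1}$ with $k=|W|$ be the corners in this cyclic order, so that the drawn arc is the concatenation of the $k$ segments $s_i=p_ip_{i+1}$ (indices mod $k$) with $F$ locally to the left of each $s_i$. Fix an offset distance $\delta=\delta(\varepsilon)>0$, chosen below, and for each $s_i$ let $\ell_i$ be the line obtained by translating the supporting line of $s_i$ a perpendicular distance $\delta$ into $F$. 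At the corner $p_i$, where $W$ arrives along $s_{i-1}$ and leaves along $s_i$, I would set the offset vertex $q_i:=\ell_{i-1}\cap\ell_i$ whenever these lines meet in a single point; the one exceptional case is a \emph{U-turn} (interior angle $2\pi$, occurring exactly where $W$ enters and leaves along the same edge, such as at a degree-one vertex or the end of a bridge traversal), where $\ell_{i-1}$ and $\ell_i$ are parallel and I instead place $q_i$ at distance $\delta$ beyond the tip along the edge direction. Connecting $q_0,q_1,\ldots,q_{k-1}$ cyclically yields $P_{\varepsilon}$; note that the edge $q_iq_{i+1}$ is precisely the segment of $\ell_i$ between the two intersections, i.e.\ the offset of $s_i$.

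This produces at most one vertex per corner, hence at most $k=|W|$ vertices; a corner where $W$ passes straight through ($s_{i-1},s_i$ collinear) yields a collinear, inessential vertex that may be deleted, so the count never exceeds $|W|$. When $|W|<3$ the construction degenerates, namely for $|W|=1$ (an isolated vertex of $G$) and for $|W|=2$ (a single edge traversed in both directions); in each of these base cases I would simply inscribe a small triangle of diameter at most $\varepsilon$ around the drawn point or segment, which gives the bound $\max\{3,|W|\}$.

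The heart of the argument, and the step I expect to be the main obstacle, is to verify that $P_{\varepsilon}$ is a \emph{simple} polygon, $\varepsilon$-close to $W$, and on the correct side of it, even though $W$ is generally non-simple: it may revisit a cut vertex several times and may traverse bridge edges twice. I would control this by a \emph{sufficiently small $\delta$} argument based on the minimum feature size $\mu$ of the drawing of $W$ (the minimum of the shortest segment length and the smallest distance between two non-adjacent segments) together with the sharpest corner angle. For $\delta$ below a fixed fraction of $\mu$, consecutive offset edges interact only at their shared corner while non-adjacent offset edges stay at least $\mu-2\delta>0$ apart, so no self-crossing is created: each of the several corners at a revisited vertex $p$ sits in its own angular wedge of $F$ at $p$, its offset vertex remains inside that wedge, and the two antiparallel copies of a doubly-traversed bridge edge are offset to opposite sides and never meet. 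Taking in addition $\delta\le\varepsilon\,\sin(\theta_{\min}/2)$, where $\theta_{\min}$ is the sharpest interior face-angle, forces each mitered vertex $q_i$, which lies at distance $\delta/\sin(\theta_i/2)$ from $p_i$, to stay within $\varepsilon$ of $W$; this is exactly where sharp angles of $\cal H$ compel $\delta$ to be tiny, the disadvantage noted in the text. By the orientation convention every edge of $P_{\varepsilon}$ lies on the $F$-side of the corresponding $s_i$ at perpendicular distance $\delta\le\varepsilon$, so $W$ lies in the interior of $P_{\varepsilon}$ for an inner facial walk and in its exterior for the outer one, as required.

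For nestedness I would let $\delta(\varepsilon)$ be a fixed increasing function of $\varepsilon$, for instance $\delta(\varepsilon)=\min\{\delta_0,\ \varepsilon\,\sin(\theta_{\min}/2)\}$. Increasing the offset distance pushes each $\ell_i$ strictly further into $F$ and slides each $q_i$ monotonically outward along its angle bisector, so the polygons form a monotone family; hence for $\varepsilon'<\varepsilon$ the polygon $P_{\varepsilon'}$ lies strictly on the $W$-side of $P_{\varepsilon}$, which is the required nesting. Finally, $\mu$, the corner angles, and all the vertices $q_i$ are computable in time polynomial in $|W|$, so the nested family is constructed efficiently.
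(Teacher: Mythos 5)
Your construction is essentially the paper's own: the paper likewise places one offset vertex per corner of $W$ on the angle bisector at that corner (at fixed distance $\varepsilon$ along the bisector, rather than your mitered distance $\delta/\sin(\theta_i/2)$), joins these points cyclically into a closed polygonal chain, argues it is self-crossing free once the offset parameter is small relative to the feature size of the drawing, and falls back to a triangle in the degenerate cases $|W|\le 2$. The only difference is the parametrization---constant distance along the bisector versus constant perpendicular offset of the edges---which is immaterial to the vertex count, the $\varepsilon$-closeness, the simplicity argument, and the nestedness.
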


\begin{proof}
 Let $e,v,f$ be a {\em corner} of $W$, that is, two consecutive edges $e$, $f$ and their shared vertex $v$. At $v$ erect the angle bisector of $e$ and $f$ of length $\varepsilon$ (inside $F$), and let $v'$ be the endpoint of the bisector different from $v$. For computational reasons, it may be better to use the $\ell_1$-norm at this point (the Euclidean norm will lead to square root expressions in the coordinates). If $(v_i)_{i = 1}^k$ is the sequence of vertices along $W$, with $k = |W|$, then $(v'_i)_{i = 1}^k$ defines a
 closed polygonal chain.  
 If $\varepsilon$ is sufficiently small, namely less than half the distance between any vertex of $W$ and a non-adjacent edge on $W$, the
 polygonal chain
 is free of self-crossings, and therefore bounds a simple polygon with $|W|$ vertices. There are two special cases in which this argument does not work: if the facial walk is a facial walk on an isolated vertex or an isolated edge. In both of these cases, we can approximate $W$ using a
 triangle.
\end{proof}

To prove Lemma~\ref{cor:fw} we can use
Lemma~\ref{lem:fw} to efficiently construct an inner $\varepsilon$-approximating polygon for
each facial walk of $F$.
The resulting polygons are disjoint and form a polygonal region
as long as $\varepsilon$ is less than half the distance between any two non-adjacent vertices or edges of $\cal H$.

\remove{
Lemma~\ref{lem:fw} allows us to replace a facial boundary with a
{\em polygonal region},
that is, a collection of
simple polygons 
that bound a face which is very close to the original boundary, has bounded complexity, and can be constructed efficiently. This leads to a proof of Lemma~\ref{cor:fw}. Namely, approximate each facial walk of the facial boundary with an $\varepsilon$-close
polygon 
lying in $F$. The
result is a polygonal region
as long as $\varepsilon$ is less than half the distance between any two non-adjacent vertices or edges. The upper bound of $3k$ will generally be a large overestimate, but allows for the possibility that all the inner walks are walks on isolated vertices. If there are no isolated vertices, then a walk of size $k$ gets replaced by a polygon of size at most $3k/2$ (a tight bound for $k=2$), proving the slightly sharper upper bound.
}

\subsection{Extending Partial Embeddings}\label{sec:EPP}

Our main technical tool in the proof of Theorem~\ref{thm:PWG} is the following lemma. We suggest skipping the proof of this lemma in a first reading. Multigraphs, in this paper, may have multiple edges and loops.

\begin{lemma}\label{cor:Te}
Let $G$ be a multigraph with a given planar embedding and fixed locations for a subset $U$ of its vertices. Suppose we are given a straight-line drawing of a tree $T$ whose leaves include all the vertices in $U$ at their fixed locations. Then for every $\varepsilon > 0$ there is a planar poly-line drawing of $G$ that is $\varepsilon$-close to $T$, that realizes the given embedding, where the vertices in $U$ are at their fixed locations, where each edge has at most $12n_T$ bends, and where each edge comes close to each vertex $u$ in $U$ at most six times (where coming close to $u$ means entering and leaving an $\varepsilon$-neighborhood of $u$ or terminating at $u$).
\end{lemma}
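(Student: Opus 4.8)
The plan is to route the edges of $G$ inside a thin tube around the drawing of $T$, letting them cross $T$ (each crossing paid for by a bounded number of bends) but never cross one another, exactly in the spirit of Pach and Wenger.

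\textbf{Set-up.} First I would fix the thin $\varepsilon$-neighbourhood $D$ of the drawing of $T$. Since $T$ is a tree, $D$ is a topological disc and its boundary $\gamma$ is a simple closed curve; keeping the whole drawing of $G$ inside $D$ automatically makes it $\varepsilon$-close to $T$. The leaves of $T$, in particular the vertices of $U$, sit on $\gamma$, and the straight-line drawing of $T$ fixes their cyclic order $\sigma$ along $\gamma$. The key topological observation is that $D\setminus T$ splits into one ``channel'' per gap between consecutive leaves along $\gamma$, and that moving an edge from one channel into an adjacent channel costs exactly one crossing with one edge of $T$. Equivalently, cutting the plane open along $T$ (which is possible since $\mathbb{R}^2\setminus T$ is connected) unfolds a neighbourhood of $T$ into a disc whose boundary traverses the doubled tree, with each edge of $T$ appearing as two banks and each leaf once; an edge of $G$ that crosses $T$ is one that jumps from one bank to the matching bank. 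This is the mechanism that will let us realise the prescribed embedding regardless of how $\sigma$ relates to it.

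\textbf{Construction.} I would first reduce to the case that $G$ is connected with a rigid embedding, augmenting with auxiliary edges to pin down the rotation system (and, for a disconnected $G$, nesting the components into the channels according to the face-containment data and deleting the auxiliary edges at the end). I then place each vertex of $G\setminus U$ at a chosen point beside $T$, keep the vertices of $U$ at their fixed leaf positions, and realise the required rotation at every vertex by letting its incident edges leave into the tube in the prescribed cyclic order. Finally each edge $(x,y)$ is drawn as a curve that runs inside the tube from the channel of $x$ to the channel of $y$, crossing the edges of $T$ needed to pass from channel to channel and otherwise hugging $\gamma$; the depth at which these curves are stacked within the tube is dictated by the faces of the given embedding, so that the edges nest rather than cross one another. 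A careful routing can be arranged so that each edge crosses each edge of $T$ only a bounded number of times and enters the $\varepsilon$-neighbourhood of each leaf at most six times; since $T$ has $n_T-1$ edges this bounds the number of bends by at most $12n_T$.

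\textbf{Main obstacle.} The crux is reconciling the fixed cyclic order $\sigma$ of $U$ on $\gamma$ with the given combinatorial embedding of $G$: the vertices of $U$ need not lie on a common face, so, unlike the original Pach--Wenger setting where the fixed vertices carry no prescribed edges among them, we cannot simply seat $U$ on an outer cycle in the order $\sigma$. The resolution is precisely the freedom to cross $T$ -- its edges act as bridges between channels, so that any embedding becomes realisable no matter what $\sigma$ is. The genuine work, and the step I expect to be hardest, is to carry this out while simultaneously guaranteeing that (i) the edges of $G$ stay pairwise non-crossing, so the drawing realises the embedding and is planar, (ii) each edge crosses the edges of $T$ at most the claimed constant number of times, giving the $12n_T$ bound, and (iii) each edge comes close to each vertex of $U$ at most six times.

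\textbf{Handling the obstacle.} I would establish (i)--(iii) by an inductive peeling of the embedding, processing either a spanning tree of $G$ or the faces in a search order of the dual, and maintaining an invariant that records how the already-drawn edges are layered inside each channel. Each newly inserted edge is then routed at the correct depth, crossing $T$ only to move between the channels of its two endpoints and following $\gamma$ elsewhere; the invariant forces the new curve to avoid all previous ones and keeps its interaction with each leaf neighbourhood down to the constant bound. Verifying that this bookkeeping indeed yields the constants $12$ and $6$, rather than merely $O(n_T)$ and $O(1)$, is where the detailed (and deliberately skippable) computation lives.
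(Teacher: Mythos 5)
There is a genuine gap, and it sits exactly where you defer the work. Your bend accounting assumes that bounding the number of crossings between an edge of $G$ and each edge of $T$ by a constant yields at most $12n_T$ bends. But bends do not arise only at crossings with $T$: a curve that ``hugs $\gamma$'' must turn near every tree vertex it passes, so its bend count is proportional to the total length of its walk through the channels, not to its number of crossings. Even with $O(1)$ crossings per tree edge, an edge makes up to $\Theta(n_T)$ channel traversals, each of complexity up to $\Theta(n_T)$, giving a quadratic bound. The paper makes precisely this point: the obvious strategy of routing edges along the tube (there, along the Hamiltonian cycle) ``only gives a quadratic bound on the number of bends.'' Your proposed fix --- an inductive insertion of edges with a layering invariant per channel --- controls planarity and nesting, but nothing in it controls winding, i.e., how many times a single edge's route runs alongside a given edge of $T$; that is the quantity that must be $O(1)$, and no mechanism in your proposal enforces it.

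The paper's proof supplies exactly this missing mechanism, inherited from Pach and Wenger. First, $G$ is augmented to a Hamiltonian multigraph $G'$ by adding edges and subdividing each edge at most twice (Lemma~\ref{lem:PWHam}); vertices outside $U$ are then eliminated by contracting cycle edges, and later reinstated by a local disc surgery ($G^+_u$) costing at most two extra bends per edge. Hamiltonicity splits the drawing problem into the cycle plus two outerplanar graphs. The cycle is drawn so that each of its edges follows a single nested approximation curve $\Theta'_i$ of $T$, hence has at most $2|E(T)|-1$ bends (Lemma~\ref{lem:HCemb}); each remaining edge is the concatenation of exactly two half-curves $\Delta'_{i,k}$, $\Delta'_{j,k}$ split off from such approximations and joined at the common point $p_1$, hence has at most $4|E(T)|-1$ bends (Lemma~\ref{lem:Te}). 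It is this ``each edge follows $O(1)$ tree-approximating curves'' structure that yields linear bends per edge, and the specific constants $12$ and $6$ come from it arithmetically: at most three subdivision pieces per edge of $G$, each with at most $4|E(T)|+1$ bends and each coming close to a leaf at most twice, giving $3(4|E(T)|+1) < 12 n_T$ bends and $3 \cdot 2 = 6$ approaches per vertex of $U$. Without the Hamiltonian augmentation and the level-curve construction, these bounds are not merely unverified constants in your argument; the argument as described does not deliver even $O(n_T)$ bends per edge.
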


Our proof of Lemma~\ref{cor:Te} will follow closely the structure of Pach and Wenger's algorithm~\cite{PW01} to draw a planar graph with fixed vertex locations. That algorithm has three ingredients: $(i)$ making $G$ Hamiltonian, $(ii)$ drawing the Hamiltonian cycle of $G$, and $(iii)$ drawing the remaining edges of $G$. We use their result $(i)$ directly:

\begin{lemma}[{Pach, Wenger~\cite{PW01}}]\label{lem:PWHam}
  Given a planar graph $G$ we can in linear time construct a Hamiltonian graph $G'$ with $|E(G')| \leq 5|E(G)|-10$ by adding and subdividing edges of $G$ (each edge is subdivided by at most two new vertices). \end{lemma}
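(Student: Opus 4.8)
The statement is the classical fact that every planar graph sits inside a Hamiltonian planar graph after a controlled amount of subdivision, so my plan follows the natural route of building a Hamiltonian cycle out of a spanning tree. Write $n=|V(G)|$ and $m=|E(G)|$. I would first reduce to the essential case: if $G$ is disconnected, join its components by a few edges placed inside a common face (this only adds edges, preserves planarity, and is cheap against the budget), and I would dispose of the cases $n\le 2$ by hand, where the additive $-10$ is absorbed. So assume $G$ is connected and comes with a fixed planar embedding. The key point that forces subdivisions (rather than mere edge additions) is that once a planar graph is maximal one cannot add a single edge, yet a maximal planar graph need not be Hamiltonian; inserting new vertices is therefore unavoidable, and subdividing edges is the cheapest way to do it.

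The heart of the construction is a spanning tree $T$ of $G$ and its \emph{contour}: the boundary of a thin neighborhood of the drawing of $T$ is a simple closed curve that runs alongside each tree edge on both sides and sweeps past each vertex $v$ exactly $\deg_T(v)$ times. I would turn this contour into an honest simple cycle $C$ that passes through each of the $n$ original vertices exactly once: at each vertex one designates a single pass to go through the vertex and reroutes the remaining $\deg_T(v)-1$ passes through newly inserted vertices placed on the edges around $v$. Because the contour connects vertices that are consecutive in the Euler tour of $T$---hence geometrically close, separated by at most one tree edge---each connecting arc of $C$ can be realized planarly inside the neighborhood of $T$ after subdividing the relevant edge by at most two new vertices. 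This yields a simple planar cycle $C$ through all original vertices, using only a linear number of subdivision vertices.

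With $C$ in hand, I would add the edges of $G$ back in. Each edge $(a,b)$ of $G$ joins two original vertices, both of which already lie on $C$, so adding it cannot destroy Hamiltonicity; respecting the fixed embedding, I route $(a,b)$ inside the face it originally occupied, subdividing it by at most two new vertices so that it can leave $a$ and reach $b$ through the correct gaps of the contour without crossing the arcs of $C$. The resulting graph $G'$ is planar, contains $G$ (each original edge subdivided at most twice), and is Hamiltonian via $C$. A count of the arcs of $C$ together with the (at most twice) subdivided edges of $G$, using $n\le m+1$ for the connected graph, is then arranged to give $|E(G')|\le 5m-10$.

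The step I expect to be the main obstacle is the simultaneous control of planarity, the subdivision budget, and the edge count. One must check that the fixed rotation at each endpoint lets every edge of $G$ escape into its proper corner of the contour and be pushed just past the adjacent arc of $C$ with a single subdivision at each end, and one must tighten the bookkeeping so that the arcs of $C$ plus the subdivided edges of $G$ total at most $5m-10$ rather than a weaker linear bound. A cleaner-to-state but quantitatively less direct alternative would triangulate $G$ to a maximal planar graph and then insert vertices to destroy all separating triangles, producing a $4$-connected triangulation that is Hamiltonian by Tutte's theorem; there the difficulty migrates to bounding the subdivisions needed to eliminate possibly deeply nested separating triangles.
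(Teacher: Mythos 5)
Your overall plan---spanning tree, contour/Euler-tour walk, one designated pass per vertex, detours at the repeated passes, at most two subdivision vertices per edge---is indeed the construction behind this lemma. Note, though, that the paper itself does not prove the statement: it quotes it from Pach and Wenger~\cite{PW01} (remarking only that their proof also works for multigraphs with loops and parallel edges), so the benchmark is the original Pach--Wenger construction, and your outline matches it in spirit.

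However, your step for reinserting the edges of $G$ has a genuine gap, and it is exactly where Hamiltonicity can be lost. A Hamiltonian cycle of $G'$ must pass through \emph{every} vertex of $G'$, including every subdivision vertex you create. You build the cycle $C$ first, from the tree alone, and then add each edge $(a,b)$ back, ``subdividing it by at most two new vertices \dots without crossing the arcs of $C$.'' This fails on two counts. First, avoiding $C$ is in general impossible: if $(a,b)$ is a non-tree edge incident to an internal vertex $v$ of $T$ and, in the fixed rotation at $v$, it lies in a sector swept by one of the $\deg_T(v)-1$ detour arcs, then that arc separates $v$ from the rest of the sector, so $(a,b)$ must cross it (take two triangles glued at a vertex $v$ with a spanning tree in which $v$ is internal; one of the two non-tree edges at $v$ is forced to cross a detour). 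Second, and worse, if the subdivision vertices placed on $(a,b)$ do not lie on $C$, then $C$ is no longer a Hamiltonian cycle of $G'$, so even where crossings can be avoided, your subdivisions themselves destroy the property you are trying to establish. The repair is the crux of the Pach--Wenger argument: route $(a,b)$ exactly as in the given embedding, let it cross the detour arcs of $C$ (with the contour drawn in a sufficiently thin neighborhood of $T$ this happens at most once near $a$ and once near $b$, and tree edges are never crossed), and turn each crossing point into a new vertex that subdivides \emph{both} the edge $(a,b)$ \emph{and} the arc of $C$ it crosses. Then every new vertex lies on $C$, so $C$ stays Hamiltonian and each edge of $G$ is subdivided at most twice. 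This also supplies the bookkeeping you left open: with $s$ crossing vertices, $s\le 2(m-n+1)$, we get $|E(G')|\le (m+s)+(n+s)\le 5m-3n+4\le 5m-10$ once $n\ge 5$, with smaller and disconnected cases handled by your preliminary reduction.
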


We will use a slightly stronger version of Lemma~\ref{lem:PWHam} in which $G$ is allowed to be a multigraph. Pach and Wenger's proof of Lemma~\ref{lem:PWHam} works in the presence of multiple edges and loops.

For part $(ii)$ Pach and Wenger show that a Hamiltonian cycle can be drawn at fixed vertex locations $\varepsilon$-close to a star connecting all the vertices. For our application, we replace their star with a straight-line drawing of a tree $T$ whose leaves are the vertices $v_i$. Lemma~\ref{lem:HCemb} shows how to draw the Hamiltonian cycle. Later we will see how to draw the remaining edges.

Independently of our result, the generalization of part $(ii)$ to trees has essentially been shown by Chan et al.~\cite{CHKL13}. Since their goal was to minimize edge lengths, they did not give an estimate on the number of bends.

\begin{lemma}
\label{lem:HCemb}
 Let $C$ be a cycle with fixed vertex locations, and suppose we are given a straight-line planar drawing of a tree $T$, in which the vertices of $C$ are leaves of $T$ at their fixed locations. Then for every $\varepsilon > 0$ there is a planar poly-line drawing of $C$ with at most $2|E(T)|-1$ bends per edge and $\varepsilon$-close to $T$.
\end{lemma}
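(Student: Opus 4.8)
The plan is to realize $C$ as a single simple closed curve that hugs the boundary of a thin neighbourhood of $T$, following the template of Pach and Wenger's star routing~\cite{PW01} but with the star replaced by the given tree. A useful first observation is that, although the vertices of $C$ are leaves of $T$, in the drawing they are \emph{interior} points of such a neighbourhood: its boundary passes beyond each leaf tip, so the tips lie strictly inside. Hence there is no boundary-order obstruction, and a simple closed curve confined to the neighbourhood is free to pass through the tips in any prescribed cyclic order, even when that order differs from the order in which the tips occur around $T$.

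Concretely, I would first fix $\delta$ with $0<\delta<\varepsilon$ smaller than half the minimum feature size of the drawing of $T$ and, applying the construction of Lemma~\ref{lem:fw} to the unique facial walk $W$ of $T$ (which has $|W|=2|E(T)|$), build a simple closed polygon $B$ that encloses $T$, lies within distance $\delta$ of it, and has at most $2|E(T)|$ vertices. I would then route the edges of $C$ one at a time: the edge $v_iv_{i+1}$ is drawn as a poly-line that leaves the tip $v_i$, runs alongside $B$ at a small inward offset, and enters the tip $v_{i+1}$, crossing edges of $T$ where necessary but never leaving the neighbourhood. To force the $k$ edges to assemble into one simple closed curve rather than a self-crossing one, the edges are given pairwise-distinct nested offsets from $B$, and for each edge one chooses the direction in which it wraps around $B$; this mirrors exactly the way Pach and Wenger nest the circular arcs around the centre of their star, so that strands meet only at a shared tip $v_i$ and are otherwise separated by their offsets.

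Granting a non-crossing routing, the remaining bounds are immediate. A strand that runs alongside $B$ turns only where $B$ turns, so it meets at most the $2|E(T)|$ vertices of $B$; since the two tips $v_i,v_{i+1}$ are endpoints of the edge rather than interior bends, each edge carries at most $2|E(T)|-1$ bends, the extremal case being a loop that must traverse the whole of $B$. The drawing is $\varepsilon$-close to $T$ because every strand stays inside the neighbourhood captured by $B$, it is planar by the non-crossing property, and the vertices of $C$ occupy their prescribed tip locations by construction.

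The step I expect to be the main obstacle is precisely the non-crossing guarantee: showing that, for an \emph{arbitrary} cyclic order of the leaves around $T$, the offsets and wrap directions can be chosen consistently so that no two strands cross. I would establish this inductively, processing the edges in the cyclic order of $C$ and maintaining the invariant that the partially drawn curve together with $B$ bounds a region into which the next edge can be inserted at a fresh innermost offset, with the final edge closing the curve up; the bend and closeness estimates are insensitive to these choices and therefore survive unchanged. This is the tree analogue of Pach and Wenger's planarity argument, and is essentially the generalisation of part~$(ii)$ to trees carried out by Chan et al.~\cite{CHKL13}.
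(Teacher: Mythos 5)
Your high-level plan (replace Pach--Wenger's star by $T$, route each cycle edge along an offset copy of the tree boundary at its own level, let strands meet only at shared tips) is the right template, and you correctly isolate the crux: the non-crossing guarantee for an arbitrary cyclic order of the leaves. But the mechanism you propose for that crux --- every strand runs alongside the \emph{full} boundary polygon $B$ at a fixed offset, with planarity to be rescued by choosing, per edge, the wrap direction --- is provably insufficient; no processing order and no assignment of distinct nested offsets can repair it. The obstruction is this: a strand terminating at a tip $v_i$ must travel from $v_i$ (which lies on $T$, at offset $0$) out to its own level, and this connection segment crosses every strand that wraps around the tip $v_i$ at a smaller offset. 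Hence, whenever an edge $e_k=v_kv_{k+1}$ wraps around a tip $v_i$ with $i\neq k,k+1$, its offset $d_k$ must exceed both offsets $d_{i-1}$ and $d_i$ of the two cycle edges ending at $v_i$. Now take $T$ to be a star with four leaves whose cyclic order around $B$ is $v_1,v_3,v_2,v_4$, and let $C=v_1v_2v_3v_4$. Edge $e_1=v_1v_2$ must wrap around $v_3$ or around $v_4$ (both boundary arcs between $v_1$ and $v_2$ are nonempty); either choice forces $d_1>d_3$. Edge $e_3=v_3v_4$ must wrap around $v_1$ or around $v_2$; either choice forces $d_3>d_1$. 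So within your scheme there is no consistent choice of offsets and wrap directions at all, let alone one maintainable by your proposed induction.

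What is missing is exactly the device the paper (following Pach and Wenger) uses: the guide curve for the $i$-th edge is not an offset copy of the boundary of all of $T$, but of the boundary of the \emph{partial} subtree $T_i=\bigcup_{j\le i}Q_j$, where $Q_j$ is the tree path from $p_j$ to $p_{j+1}$; the pieces of the level curve $\Theta_i$ far from $T_i$ are deleted and replaced by straight shortcut segments across the branch vertices (these shortcuts are what make the drawing cross edges of $T$, which is allowed since $T$ is only a guide). Because $T_i$ grows monotonically and later edges use larger offsets, an edge's guide curve never wraps around a tip whose incident cycle edges are still undrawn --- precisely the configuration that kills your construction --- and the one remaining danger, closing the cycle at $p_1$, is handled by always choosing the walk around $\Theta'_i$ that avoids $p_1$, so that the last edge can dive back in to $p_1$ across all levels without meeting any strand. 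Your appeal to Chan et al.~\cite{CHKL13} does not close the gap either: as noted in Section~\ref{sec:EPP}, their tree generalization comes without any bound on the number of bends, and the bound of $2|E(T)|-1$ in the lemma is obtained from walks along these restricted curves $\Theta'_i$, not along $B$.
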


\begin{proof}
 Let $p_1, \ldots, p_n$ be the vertices of $C$ in their order along the cycle. We build a planar poly-line drawing of $C$ as follows. Let $\Theta_i$ be an $i \varepsilon/n$-approximation of $T$ for $1 \leq i < n$ (which we construct using Lemma~\ref{lem:fw}). We start at $p_1$. Suppose we have already built the poly-line drawing of $p_1, \ldots, p_i$ and we want to add $p_ip_{i+1}$. Let $Q_i$ be the unique path in $T$ connecting  $p_i$ to $p_{i+1}$. Create $\Theta'_i$ from $\Theta_i$ by keeping only the vertices of $\Theta_i$ close to (approximating)
 vertices in $T_i := \bigcup_{j\leq i} Q_j$.
 This removes parts of the walk along $\Theta_i$ which we patch up as follows: suppose $v$ is an interior vertex of $T_i$, and
 $v$ is incident to $e$ which does not lie on $T_i$. Then $v$ is approximated by two vertices $v_1$ and $v_2$ which lie on bisectors formed by $e$ with neighboring edges. Now $v_1$ and $v_2$ belong to $\Theta'_i$, but the path along $\Theta_i$ between them got removed (since $e$ does not belong to $T_i$). We
 add $v_1v_2$ to $\Theta'_i$ to connect them. Note that $v_1v_2$ does not pass through $v$ since $v$ is incident to at least three edges ($e$ and two edges of $T_i$), and it does not cross any edges of any $\Theta'_j$ with $j<i$, since $T_i$ is monotone: if $e \not \in E(\Theta_i)$, then $e \not\in E(\Theta_j)$ for $j<i$. See Figure~\ref{fig:treecycle} for an illustration.
 \begin{figure}[tb]
\centering
\includegraphics[width=2.4in]{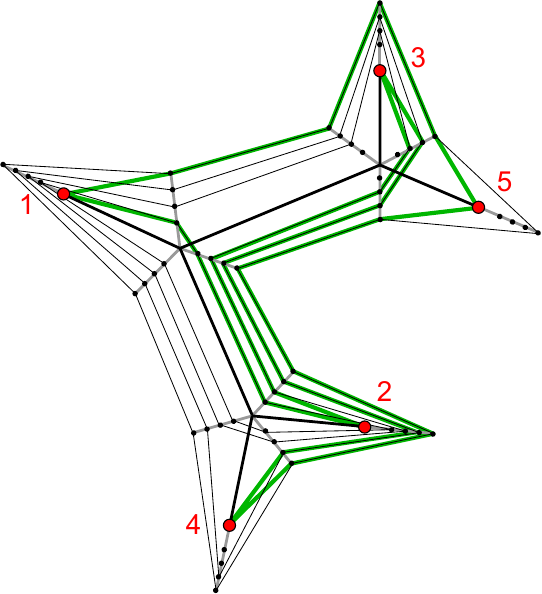}
\caption{The underlying tree $T$ is in black (thick edges), angle bisectors in gray; the $\Theta'_i$ are drawn as thin black edges; to reduce clutter, we are not showing the remaining edges of $\Theta_i$; the drawing of $C$ is indicated by the green line.}
\label{fig:treecycle}
\end{figure}
 Now both $p_i$ and $p_{i+1}$ correspond to unique vertices on $\Theta'_i$ (since they are leaves), so we can pick the facial walk $v_1, \ldots, v_k$ on $\Theta'_i$ which connects $p_i$ to $p_{i+1}$ and which avoids passing by $p_1$. We now add line segments
 $p_iv_2$, $v_2v_3$, $\ldots$, $v_{k-2}v_{k-1}$, $v_{k-1}p_{i+1}$ to the
 poly-line drawing of $C$. We treat the final edge $p_np_1$ similarly, except that we move along $\Theta'_{n-1}$ back to $p_1$ in the last step,
 which we can do, since none of the intermediate paths passed by $p_1$. Each edge of $C$ is replaced by a polygonal arc with at most $2|E(T)|-1$ bends.
\end{proof}

The following lemma shows how to draw the remaining edges of $G$, assuming that $G$ is Hamiltonian. As mentioned earlier, this lemma is close to a result by Chan et al.~\cite{CHKL13}, except for the claim about the number of bends, and the rotation system (which we need for our main result).

\begin{lemma}\label{lem:Te}
Let $G$ be a Hamiltonian multigraph with a given planar embedding and fixed vertex locations. Suppose we are given a straight-line drawing of a tree $T$ whose leaves include all the vertices of $G$ at their fixed locations. Then for every $\varepsilon > 0$ there is a planar poly-line drawing of $G$ that is $\varepsilon$-close to $T$, that realizes the given embedding, and so that the vertices of $G$ are at their fixed locations, every edge has at most $4|E(T)|-1$ bends, and every edge comes close to any leaf of $T$ at most twice.
\end{lemma}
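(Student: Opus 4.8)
The plan is to follow Pach and Wenger's three-step strategy, using Lemma~\ref{lem:HCemb} for the Hamiltonian cycle and then adding the remaining edges as chords routed close to $T$. First I would fix a Hamiltonian cycle $C = p_1 p_2 \cdots p_n p_1$ of $G$, which exists since $G$ is Hamiltonian. Because we are given a planar embedding of $G$, the closed curve $C$ cannot be crossed by any chord, so it partitions the non-cycle edges into two families: the chords $I$ drawn inside $C$ and the chords $O$ drawn outside $C$. Since the embedding is planar, the chords within each family are pairwise non-crossing, hence their endpoint pairs form a \emph{laminar} (nested-or-disjoint) system along $C$. I would draw $C$ first, invoking Lemma~\ref{lem:HCemb} to obtain a planar poly-line drawing of $C$ that is $\varepsilon$-close to $T$ with at most $2|E(T)|-1$ bends per edge; because this drawing hugs $T$, its interior is a thin region containing $T$, and the interior of $C$ corresponds to the $T$-side of the cycle.

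The heart of the proof is routing the chords while maintaining planarity, realizing the rotation system, and keeping the bend count linear in $|E(T)|$. I would reuse the monotone-approximation machinery of Lemma~\ref{lem:HCemb}: using Lemma~\ref{lem:fw} I construct a nested family of inner approximations $\Theta_\delta$ of $T$ indexed by scale $\delta$, and I route each chord $(p_i,p_j)\in I$ along the facial walk of an approximation of the \emph{tree path} in $T$ from $p_i$ to $p_j$, on the interior side of $C$. The outside chords $O$ are handled symmetrically on the exterior. Crucially, routing along the tree path (rather than along the cycle arc) keeps the trace short, since the tree path has at most $|E(T)|$ edges. To keep chords of one family from crossing, I would process them in an order refining the laminar structure---outermost first---and assign successively finer scales to more deeply nested chords, so that a nested interior chord is always drawn strictly closer to $T$ than the chord containing it, while disjoint chords never interfere. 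This is exactly the mechanism that guarantees non-crossing in Lemma~\ref{lem:HCemb}, now applied to the chord families rather than to the cycle.

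To realize the given embedding, I would observe that at each vertex $p_i$ the two cycle edges split the rotation at $p_i$ into an interior arc and an exterior arc; the chords of $I$ incident to $p_i$ must leave $p_i$ in the prescribed cyclic order within the interior arc, and likewise for $O$. Because each chord departs $p_i$ heading toward its tree path and at an offset determined by its nesting depth, I can order the departure directions at $p_i$ to match the rotation system while preserving planarity. For the quantitative bounds: the tree path between the endpoints of a chord has at most $|E(T)|$ edges, so tracing a facial walk of its approximation costs at most about $2|E(T)|$ segments; the additional detours needed to weave past the cycle already drawn at the coarser scale and to stay on the correct side at most double this, giving at most $4|E(T)|-1$ bends per chord (and cycle edges, bounded by $2|E(T)|-1$, also satisfy this). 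Finally, since the leaves of $T$ sit at degree-one tips, a chord's tree path meets a leaf only when that leaf is one of its two endpoints, so every edge comes close to any leaf of $T$ at most twice, namely near its endpoints.

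I expect the main obstacle to be the simultaneous satisfaction of all four requirements---non-crossing, the exact rotation system, the $4|E(T)|-1$ bend bound, and the ``at most twice'' proximity bound---from a single assignment of scales and processing order. Guaranteeing non-crossing forces a consistent total order refining the laminar structure of each chord family, while realizing the rotation system constrains the local departure order of chords at each vertex; reconciling these two orders, and doing so without inflating the bend count past $4|E(T)|-1$, is precisely the crux that the nested-approximation technique of Lemma~\ref{lem:HCemb} is designed to handle.
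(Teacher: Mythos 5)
Your proposal tracks the paper's proof through the first step (drawing the Hamiltonian cycle $C$ via Lemma~\ref{lem:HCemb} and splitting the remaining edges into the inside and the outside family), but the core step---routing each chord $(p_i,p_j)$ along a fine approximation of the \emph{tree path} from $p_i$ to $p_j$---has a genuine gap, and it is exactly the step where Pach and Wenger (and the paper) do something different. The problem is that a curve hugging the tree path crosses the already-drawn cycle $C$, in general $\Omega(n)$ times, which violates planarity outright; it is not a matter of ``detours'' costing extra bends. Recall that the drawing of $C$ consists of arcs of the level curves $\Theta'_1,\dots,\Theta'_{n-1}$, where $\Theta'_l$ approximates the partial tree $T_l$ at scale $l\varepsilon/n$, and each such arc cuts transversally across the edges of $T$ not in $T_l$. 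Concretely, let $T$ be a star with center $c$ whose leaves appear around $c$ in an order unrelated to the Hamiltonian order (say odd indices ascending, then even indices descending). Then the arc carrying cycle edge $p_{2k}p_{2k+1}$ must avoid passing by $p_1$, which forces it the ``long way'' around the partial star, cutting across the spoke $cp_n$ at radius about $2k\varepsilon/n$; this happens for every $k$, so any curve running parallel to the spoke $cp_n$ into $p_n$---which is what your routing of a chord incident to $p_n$ does---crosses $\Omega(n)$ cycle edges. Circumventing each obstructing arc means walking around one of its endpoints, costing up to $\Theta(|E(T)|)$ bends per obstruction, so the $4|E(T)|-1$ budget is unrecoverable by patching. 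A second, independent failure: two chords that are non-interleaved along $C$ (hence both legitimately inside $C$) can have tree paths that cross at an internal vertex of $T$ (in the star: chords $(p_2,p_4)$ and $(p_5,p_7)$ with angular order $p_2,p_5,p_4,p_7$ around $c$); any two curves hugging these paths cross each other no matter how the scales are nested, so your laminar-order/finer-scale rule does not guarantee planarity even within one family.

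The missing idea is that chords are \emph{not} routed along their own tree paths but through the hub $p_1$, along curves parallel to the level structure of $C$ itself. For each $i$ the paper takes $m_1$ finer approximations $\Delta_{i,k}$ of $\Theta'_i$; by construction each $\Delta_{i,k}$ crosses $C$ exactly twice, so splitting it at those two crossings and extending the free ends yields one arc inside $C$ and one outside, each joining $p_i$ to $p_1$ with at most $2|E(T)|-1$ bends. A chord $p_ip_j$ is the concatenation of the two arcs with the same index $k$ (a distinct $k$ per chord, assigned compatibly with the embedding); all interactions between different chords are then confined to a neighborhood of $p_1$, where Pach and Wenger show the curves touch rather than cross and can be separated with one extra bend, giving $2(2|E(T)|-1)+1=4|E(T)|-1$ bends. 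The ``close to each leaf at most twice'' property holds because each of the two arcs passes near each leaf of $T$ at most once; note that, contrary to the last claim in your write-up, every chord does come close to the leaf $p_1$ even when $p_1$ is not one of its endpoints.
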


The obvious idea---routing edges along the Hamiltonian cycle $C$---only gives a quadratic bound on the number of bends, since each edge would follow the path of a linear number of edges of $C$, and each edge of $C$ has a linear number of bends. Pach and Wenger came up with an ingenious way to construct auxiliary curves with few bends based on the level curves $\Theta'_i$ which carry the cycle $C$ in the proof of Lemma~\ref{lem:HCemb}.

\begin{proof}
 Let $C$ be the Hamiltonian cycle of $G$ and let $G_1$ and $G_2$ be the two outerplanar graphs composed of $C$ and, respectively, of the edges of $G$ outside and inside $C$.  Using Lemma~\ref{lem:HCemb} we find a planar poly-line drawing of $C$ on $V(G)$. We need to show how to draw $G_1$ and $G_2$ respecting the planar embeddings induced by the given embedding of $G$. Let $n = |V(G)|$ and $m_i = |E(G_i)|$.
 We only describe how to draw $G_1$, since $G_2$ can be handled analogously. Let $\Delta_{i,k}$, $1 \leq k \leq m_1$ be a $k\varepsilon/(nm_1)$-approximation of $\Theta'_i$ constructed using Lemma~\ref{lem:fw}. For a fixed $i$, each $\Delta_{i,k}$ crosses $C$ twice: when $C$ moves from $p_i$ to $\Theta'_{i+1}$, and when it finally moves back from $\Theta'_n$ to $p_1$. As in Pach and Wenger, we can then split $\Delta_{i,k}$ at the crossings and connect their free ends to $p_1$ and $p_i$, resulting (for each $k$) in two curves $\Delta'_{i,k}$ and $\Delta''_{i,k}$ connecting $p_1$ to $p_i$, where $\Delta'_{i,k}$ lies outside $C$ (these are the curves we use for $G_1$) and $\Delta''_{i,k}$ inside $C$ (these are the curves we use for $G_2$). Each such curve has at most $2|E(T)|-1$ bends. As in the proof of Pach and Wenger, we can create edges $p_ip_j \in E(G_1)$ by concatenating $\Delta'_{i,k}$ with $\Delta'_{j,k}$. Since we chose $m_1$ such approximations, we can do this for each edge in $G_1$. There are two problems remaining: edges $p_ip_j$ now all pass through $p_1$ and they could potentially cross (rather than just touch) there. Pach and Wenger show that any two edges touch, so the drawing can be modified close to $p_1$ so as to separate all edges $p_ip_j$ from each other. This introduces at most one more bend per edge, so that the resulting edges have $2(2|E(T)|-1) + 1 = 4|E(T)|-1$ bends. Finally, note that each edge $p_ip_j$ comes close to each leaf of $T$ (including $p_1$) at most twice, once for $\Delta'_{i,k}$ and once for  $\Delta'_{j,k}$.
\end{proof}

We are finally ready to complete the proof of Lemma \ref{cor:Te}. We show how to apply Lemma~\ref{lem:Te} in case $G$ is not Hamiltonian, and not all its vertices are assigned fixed locations.

\begin{proofof}{of Lemma~\ref{cor:Te}}
By Lemma~\ref{lem:PWHam}, we can construct a graph $G'$ with a Hamiltonian cycle $C$ by subdividing each edge of $G$ at most twice, and by adding some edges, where $G'$ has a planar embedding extending the embedding of $G$.

Next we deal with the issue that not all vertices lie in $U$, the set of vertices with fixed locations.  Traverse $C$: whenever we encounter an edge of $C$ with at least one endpoint not in $U$, contract that edge. This yields a new Hamiltonian graph $G''$ with $V(G'') = U$ and a planar embedding induced by the planar embedding of $G'$. Use Lemma~\ref{lem:Te} to construct a planar poly-line drawing of  $G''$ at the fixed vertex locations, and $\varepsilon$-close to $T$, so that each edge of $G''$ has at most $4|E(T)|-1$ bends. Each vertex $u \in U$ of $G''$ corresponds to a set of vertices $V_u \subseteq V(G')$ which was contracted to $u$, so the subgraph $G'_u$ of $G'$ induced by $V_u$ is connected. Since we embedded $G''$ with the induced planar embedding of $G'$, we can now do some surgery to turn $u$ back into $G'_u$.

The idea is to remove a small disc around vertex $u$ in the drawing of $G''$, and to draw $G'_u$ inside this disc, connected to the appropriate edges leaving the disc.
This will involve introducing new vertices where edges cross into the disc.
The same idea was used in~\cite[Theorem 2]{HJL}.

To this end, we define a graph $G^+_u$, which consists of $G'_u$, a cycle $C_u$ containing $G'_u$ in its interior, and some further edges. Each vertex of $C_u$ corresponds to an edge of $G'$ ``incident to'' $G'_u$, i.e., with an end-vertex in $V_u$ and an end-vertex not in $V_u$.
Vertices appear in $C_u$ in the same order as the corresponding edges incident to $G'_u$ leave $G'_u$ (this order also corresponds to the cyclic order of the edges incident to $u$ in $G''$); each vertex of $C_u$ corresponding to an edge $e$ of $G'$ is connected to the end-vertex of $e$ in $V_u$. Finally, $G^+_u$ contains further edges that triangulate its internal faces.

Consider a small disk $\delta$ around $u$. We erase the part of the drawing of $G''$ inside $\delta$. We construct a straight-line convex drawing of $G^+_u$ in which each vertex of $C_u$ is mapped to the point in which the corresponding edge crosses the boundary of $\delta$. This drawing always exists (and can be constructed efficiently), since $G^+_u$ is $2$-connected and internally-triangulated. Removing the edges that triangulate the internal faces of $G^+_u$ completes the reintroduction of $G'_u$.

Overall, we added one bend to an edge with exactly one endpoint in $V_u$. Since an edge can have endpoints in at most two $V_u$, this process adds at most two bends per edge, so every edge has at most $4|E(T)|+1$ bends. Since each edge of $G$ was subdivided at most twice to obtain $G'$, each edge of $G$ has at most $3 (4|E(T)|+1) = 12 |E(T)| + 3 < 12 |V(T)|$ bends. Each edge of $G'$ comes close to each leaf of $T$ at most twice, so each edge of $G$ comes close to each vertex of $U$ at most six times. This concludes the proof of Lemma~\ref{cor:Te}.
\end{proofof}

\subsection{Proof of Theorem~\ref{thm:PWG}}\label{sec:PPWG}

As we mentioned earlier, it is sufficient to prove the result for each face of ${\cal H}$, so fix such a face $F$.
Let $W_i$, with $1 \leq i \leq b$, be the facial walks of $F$. We distinguish between facial walks consisting of isolated vertices, indexed by $I := \{i: |W_i| = 1\}$, and facial walks consisting of more than one vertex, with indices in $N := \{1, \ldots, b\} - I$.
Construct an inner $\varepsilon$-approximation $P_{\varepsilon}$ of $\bigcup_{i \in N} W_i$---that is, $F$ without the isolated vertices---using Lemma~\ref{cor:fw}, and let $F'$ be the face bounded by $P_{\varepsilon}$ and isolated vertices $W_i$, $i \in I$.
For $i \in N$ let $F'_i$ be the polygon in $F'$ that approximates $W_i$.
Then
$|F_i'| \le \max\{3,|Wi|\} \le |W_i| + 1$ by Lemma~\ref{lem:fw} and the fact that $W_i$ has size at least 2.
Thus we have that $|F'| \leq \sum_{i\in N} |W_i| + |N| + |I|$.


We can triangulate $F'$ using at most $|F'| + 2|N| + |I| -4$ triangles, applying the following lemma with $n=|F'|$, $h_1=|I|$, and $h_2=|N|-1$.

\begin{lemma}[Based on O'Rourke~\protect{\cite[Lemma 5.2]{OR87}}]\label{le:number-of-triangles}
Given an $n$-vertex polygonal region with $h_1$ point-holes and $h_2$ non-point-holes, this region can be triangulated by adding chords in time $O(n \log n)$. The resulting triangulation has $n+h_1 + 2h_2 -2$ triangles.
\end{lemma}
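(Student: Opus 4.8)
The plan is to reduce to the standard triangulation result for polygons with genuine polygonal holes and then absorb the degenerate point-holes separately. O'Rourke's Lemma 5.2 (the cited source) triangulates an $N$-vertex polygon with $h$ polygonal holes into $N+2h-2$ triangles in $O(N\log N)$ time; this is the engine I would invoke. Applying it to the polygonal region obtained by \emph{ignoring} the $h_1$ point-holes leaves $N=n-h_1$ vertices and $h_2$ (non-point) holes, and hence produces a triangulation with $(n-h_1)+2h_2-2$ triangles in $O(n\log n)$ time, using only chords and no new vertices.

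Next I would reinsert the point-holes one at a time. Each point-hole is an isolated vertex in the interior of the region, so it lies either in the interior of a unique triangle of the current triangulation or on one of its chords. In the first case, connecting it to the three vertices of the containing triangle splits that triangle into three; in the second case, subdividing the shared chord and connecting the new vertex to the two opposite apices splits the two incident triangles into four. In \emph{either} case the triangle count increases by exactly two, and planarity is preserved. Doing this for all $h_1$ point-holes therefore adds $2h_1$ triangles, for a total of $(n-h_1)+2h_2-2+2h_1 = n+h_1+2h_2-2$, as claimed. For the running time I would build a point-location structure over the initial triangulation once (in $O(n)$ or $O(n\log n)$ time), locate each point-hole in $O(\log n)$ time, and perform each split in $O(1)$ time, so the insertions cost $O(h_1\log n)=O(n\log n)$ overall, matching the stated bound.

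As an independent verification of the triangle count I would rederive it from Euler's formula $V-E+F=2$ applied to the final triangulation. Here $V=n$ (only chords are added), the faces are the $t$ triangles, the $h_2$ untriangulated holes, and the outer face, so $F=t+h_2+1$, while summing edges around all faces gives the incidence identity $2E=3t+(n-h_1)$, since each triangle contributes three edges and the outer and hole boundaries together account for the $n-h_1$ non-point-hole vertices worth of boundary edges. Combining $E=n+t+h_2-1$ with $2E=3t+(n-h_1)$ yields $t=n+h_1+2h_2-2$, confirming the formula.

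The main obstacle is the careful treatment of the point-holes, which are degenerate ``holes'' (single vertices rather than regions): one must argue that each can be absorbed as an ordinary interior vertex without disturbing planarity or the count, and that the cited result, stated for true polygonal holes, extends in this way. The robustness of the $+2$ increment to whether a point-hole falls inside a triangle or on a chord is exactly what makes this extension clean, so I expect the bookkeeping rather than any deep geometric difficulty to be the crux; the $O(n\log n)$ bound then follows routinely from standard point location.
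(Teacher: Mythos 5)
Your proof is correct, but it follows a genuinely different route from the paper's. The paper cites O'Rourke only for the $O(n\log n)$ algorithm and obtains the count in one stroke by an angle-sum argument: in any chord-triangulation that uses all vertices (point-holes included), every triangle has angle sum $\pi$; writing $n_0$ for the number of outer-boundary vertices and $n_2$ for the number of vertices on the $h_2$ polygonal holes, the outer boundary contributes $(n_0-2)\pi$, each isolated vertex contributes a full $2\pi$, and the hole vertices contribute $(n_2+2h_2)\pi$ in total, so $t\pi=[(n_0-2)+2h_1+(n_2+2h_2)]\pi$, i.e.\ $t=n+h_1+2h_2-2$. That argument needs no induction and treats point-holes as ordinary interior vertices from the outset. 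You instead invoke O'Rourke's count for genuine holes only and absorb the point-holes incrementally, verifying the $+2$ increment in both the interior-of-a-triangle and on-a-chord cases; this buys a more explicit algorithm (it says exactly how to extend a hole-only triangulation routine), at the price of a case analysis. Notably, your closing Euler-formula computation is the exact combinatorial twin of the paper's angle count, and by itself it already proves the formula for \emph{any} chord-triangulation, so the incremental bookkeeping is logically redundant for the counting claim. One small flaw in your time analysis: the point-holes cannot all be located ``once in the initial triangulation,'' since several may fall in the same triangle and later insertions must be located inside the refinement created by earlier ones. This is easily repaired---locate all point-holes first, then for a triangle containing $k$ of them triangulate those $k+3$ points independently in $O(k\log k)$ time, which yields $2k+1$ triangles and so preserves the $+2$-per-point count---but as written the $O(h_1\log n)$ insertion bound skips this step.
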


\begin{proof}
The time bound can be derived from the algorithm of O'Rourke~\cite[Lemma 5.1]{OR87}. Consider the total sum of all angles in triangles of the triangulation. Suppose there are $n_0$ vertices on the outer face, $n_1 = h_1$ isolated vertices, and $n_2$ vertices on non-point-holes (of which there are $h_2$). Then the total angle sum is $[(n_0-2) + 2n_1 + (n_2 + 2h_2)]\pi$ which equals $t\pi$, where $t$ is the number of triangles. We conclude that $t =  n + h_1 + 2h_2-2$.
\end{proof}



We use a result of Bern and Gilbert~\cite{BG92} to construct a straight-line drawing of the dual of the triangulation.  Bern and Gilbert place a vertex at the {\em incenter} of each
triangle (where the angle bisectors of the triangle meet) and prove that the straight-line edge joining two vertices in adjacent triangles lies within the union of the two triangles. Now take a spanning tree $T$ of the dual. By Lemma~\ref{le:number-of-triangles}, $T$ has $|F'| + 2|N| + |I| -4$ vertices. For each facial walk $W_i$, $i \in N$, we augment $T$ with a new leaf  $v_i$ close to $W_i$ and inside $F'$; for each facial walk $W_i$, $i \in I$, we add the isolated vertex of $W_i$ to $T$ as a new leaf $v_i$. This adds $|N| + |I|$ vertices to $T$, so the number of vertices of $T$ is now $n_T = |F'| + 3|N| + 2|I| -4.$

Let $G_F$ be the embedded multigraph obtained by restricting $G$ to vertices and edges lying inside or on the boundary of $F$ and by contracting each facial walk $W_i$ of $F$ to a single vertex $v_i$. We can now use Lemma~\ref{cor:Te} to embed $G_F$ along $T$ so that vertices $v_i$ are drawn at their fixed locations. Each edge of $G_F$ has at most $12n_T$ bends.

\begin{figure}[tb]
\centering
\includegraphics[width=0.8\textwidth]{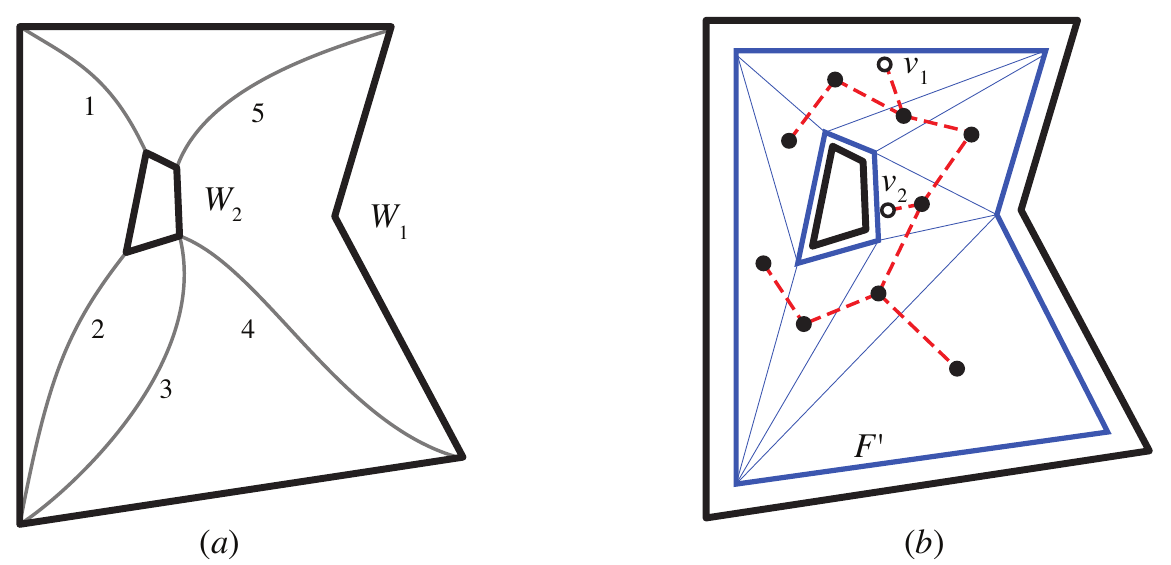}
\caption{A face $F$ with outer facial walk $W_1$ and inner facial walk $W_2$.  (\emph{a}) The 5 edges of $G-H$.  (\emph{b}) The inner approximation $F'$ (heavy blue lines), a triangulation of it (fine lines), and the dual spanning tree (dashed red) with extra vertices $v_1$ and $v_2$ close to  $W_1$ and $W_2$, respectively.}

\label{fig:route1}
\end{figure}

We now want to connect edges in $G_F$ to the boundary components they belong to. For facial walks $W_i$, $i \in I$, there is nothing to do, since we chose $v_i$ to be the isolated vertex which is the boundary component $W_i$. So we may assume that we are dealing with boundary components consisting of more than one vertex. We will use the buffer between $F'$ and $F$ to do this. In fact, we need to split the buffer zone into two,  so we apply
Lemma~\ref{cor:fw} a second time to obtain an inner $\varepsilon/2$-approximation $F''$ of $F$, so that $F' \subseteq F'' \subseteq F$. See Figure~\ref{fig:route23}. Let $F''_i$ be the polygon that approximates $W_i$ in $F''$. Note that $|F_i''| = |F_i'| \le |W_i| + 1$.
Now for each walk $W_i$ we extend the edges ending at $v_i$ to their endpoint on $W_i$. Since we maintained the cyclic order of $G_F$-edges at $v_i$, we can simply route these edges around $W_i$ using approximations to $W_i$ via Lemma~\ref{cor:fw}, and we can do so in $F_i-F_i''$.
This adds two bends to the edge near $v_i$, plus at most one bend for each vertex of $F_i''$ except the one corresponding to the final destination vertex on $W_i$.  In total we add at most $2 + |F_i''|-1 \le |W_i| + 2$ bends.
There is one difficulty: there are edges of $G_F$ that pass by $v_i$, separating it from the segment of $F'$ close to $v_i$ (which is our gate to $W_i$). To remedy this difficulty, we first route all of these edges around the whole obstacle $W_i$ in the $F''-F'$ part of the buffer, which adds $|F_i'| + 3 \le |W_i| + 4$ bends to an edge every time it passes $v_i$ (see Figure~\ref{fig:route23}$(b)$, note that the edge starts with one bend close to the vertex).


Now we are free to route the $G_F$-edges incident to $v_i$ to their endpoints along $W_i$. Since an edge can pass by and/or terminate at a vertex at most six times, the number of additional bends in each edge caused by going around
$W_i$ is at most $6(|W_i| + 4) = 6|W_i| + 24$; totalling this number over all boundary components of $F$ yields a bound of at most
$6 \sum_{i\in N} |W_i| + 24|N|$ bends along the whole edge (we can ignore $W_i$ with $i \in I$, since we do not reroute around those components). Since each $G_F$-edge started with $12n_T$ bends, each $G_F$-edge now has at most
$12n_T + 6\sum_{i\in N} |W_i| + 24|N|$ bends.


In order to derive a bound in terms of $n_H=|V(H)|$, we use:

$(1)$ $n_T = |F'| + 3|N| + 2|I| -4$  (as discussed in the first part of this subsection),

$(2)$ $|F'| \leq \sum_{i\in N} |W_i| + |N| + |I|$ (as discussed in the first part of this subsection),


$(3)$ $\sum_{i\in N} |W_i| \leq 2n_H$ (which can be easily proved by induction on $|N|$, primarily, and on the number of $2$-connected components of $W_i$, if $|N|=1$), and
 
 
 $(4)$ $2|N| + |I| \leq n_H$ (since each facial walk $W_i$ with $i\in N$ consists of more than one vertex).

\remove{In order to derive a bound in terms of $n_H=|V(H)|$, we use $(1)$ $n_T = |F'| + 3b-4$ (as discussed in the first part of this subsection), $(2)$ $|F|=\sum_{i\in N} |W_i| + |I| \leq 2n_H$, $(3)$ $b = |N| + |I|$, $(4)$ $2|N| + |I| \leq n_H$ (since each facial walk $W_i$ with $i\in N$ consists of more than one vertex), and $(5)$ $|F'| \leq 3/2 \sum_{i\in N} |W_i| + |I|$ (as discussed in the first part of this subsection).
}

From (1) and (2) we get that $n_T \leq \sum_{i\in N} |W_i| + 4|N| + 3|I|$.
Thus the number of bends in each $G_F$-edge is at most
\begin{align*}
12n_T + 6\sum_{i\in N} |W_i| + 24|N|
&\leq 12( \sum_{i\in N} |W_i| + 4|N| + 3|I|) + 6\sum_{i\in N} |W_i| + 24|N| \\
&\leq 18 \sum_{i\in N} |W_i| + 72 |N| + 36 |I| \\
&\leq 18 (\sum_{i\in N} |W_i|) + 36 (2|N| + |I|).
\end{align*}
From (3) and (4), we conclude that each $G_F$-edge has at most $36n_H + 36 n_H = 72 n_H$ bends.




\begin{figure}[tb]\centering
\includegraphics[width=\textwidth]{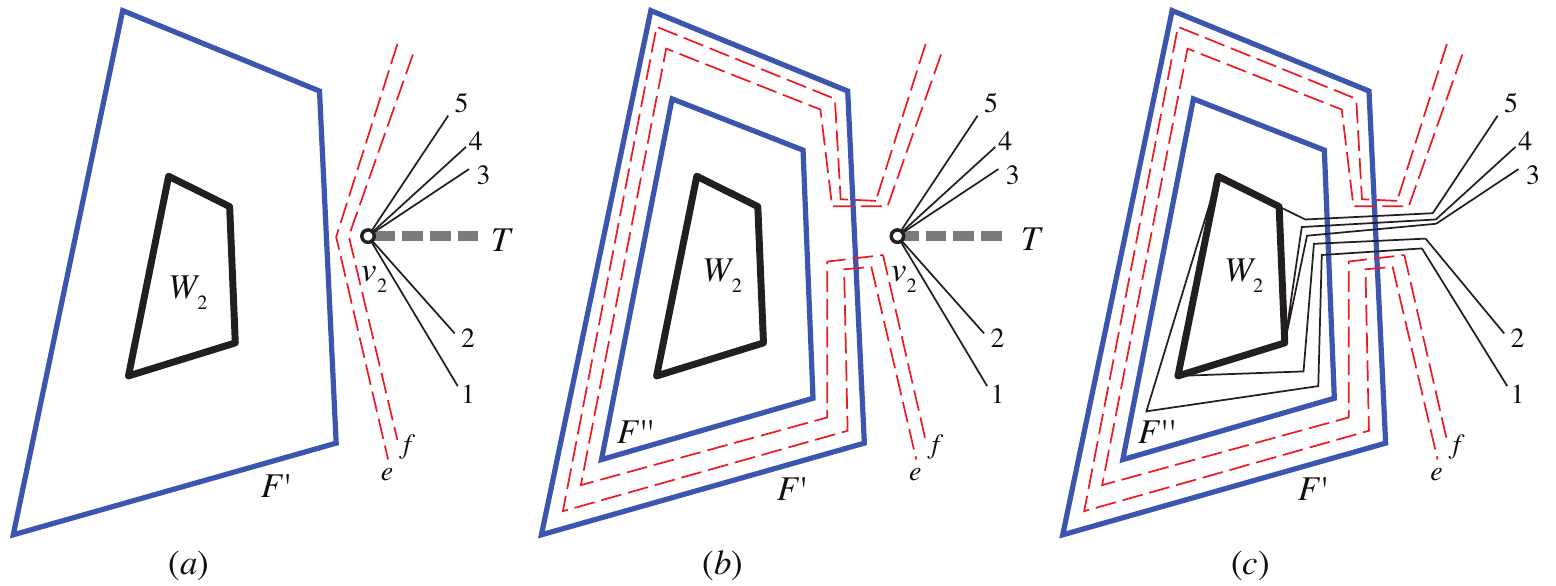}
\caption{A close-up of the situation near inner facial walk $W_2$.
(\emph{a}) After drawing $G_F$ around the tree $T$ (heavy dashed line), edges
$1, \ldots, 5$ are incident to $v_2$ in the correct cyclic order, but two other edges $e$ and $f$  pass by between $v_2$ and $F'$.
(\emph{b})  We add a second approximation $F''$ and route the edges $e$ and $f$ (in dashed red) around $W_2$  in the buffer zone between $F''$ and $F'$.
(\emph{c})  We route the edges incident to $W_2$ in the buffer zone between $F$ and $F''$.}
\label{fig:route23}
\end{figure}

Most of the steps in the construction can be performed in linear time. Building the triangulation takes time $O(n_H \log n_H)$. The overall running time is thus bounded by the size of the resulting drawing which contains a linear number of edges each with a linear number of bends, yielding the quadratic running time.

{\bf Remark 1.} The algorithm we presented in this section provides a bound better than $72 n_H$ bends per edge if the subgraph $H$ of $G$ for which a straight-line drawing $\cal H$ is given as part of the input is {\em induced}. If that is the case, then the embedded multigraph $G_F$ defined in this section contains no self-loops; consequently, a Hamiltonian graph $G'_F$ can be constructed in linear time by adding vertices and edges and by subdividing edges of $G_F$ so that each edge is subdivided by at most one new vertex (while in the general case we use two subdivision vertices per edge, see Lemma~\ref{lem:PWHam}). This immediately allows us to improve the bounds in Lemma~\ref{cor:Te} on the number of bends per edge to $8 n_T$ and on the number of times each edge comes close to each vertex $u$ to at most four. The same analysis as above and the improved bounds of Lemma~\ref{cor:Te} allow us to upper bound the number of bends per edge in Theorem~\ref{thm:PWG} by $48 n_H$.

{\bf Remark 2.} An improvement upon the $72 n_H$ bound of Theorem~\ref{thm:PWG} can be obtained by modifying the placement of $v_i$, for each $i \in N$, and the route of the edges that go around $W_i$. This modification makes the algorithm slightly more involved, so we preferred to omit it from the proof and to sketch it here. The main idea is that vertex $v_i$ can be inserted not just at any point inside $F'$, but rather at a convex corner of $F'_i$ that approximates an occurrence $\sigma$ of a vertex of $W_i$. Then each edge that goes around $v_i$ and has to be ``wrapped around'' $W_i$ can save three bends (each time it passes by $v_i$) with respect to the route described in Figure~\ref{fig:route23}$(b)$. To achieve this, we bend the edge at its intersection points with $F'_i$ and then connect it directly to the suitable approximations of the vertices next to $\sigma$ along $W_i$. This route introduces $|F'_i|=|W_i| + 1$ new bends each time an edge passes by $v_i$. A similar argument can be used for the edges that terminate at some vertex of $W_i$. This results in each $G_F$-edge having at most $12n_T+6\sum_{i\in N} |W_i| + 6|N|$ bends. Then the same calculations described above lead to a bound of $63n_H$ bends per edge.

\section{Extending Partial Drawings Greedily} \label{se:greedy}

Let $G$ be a plane graph with a spanning subgraph $H$ for which we have fixed a straight-line planar drawing ${\cal H}$.
For a given ordering $\sigma=[e_1,\dots,e_m]$ of the edges in $G\setminus H$ we say that a drawing $\Gamma$ of $G$ {\em greedily extends $\cal H$ with respect to $\sigma$} if it is obtained by drawing edges $e_1,\dots,e_m$ in this order, so that $e_i$ is drawn as a polygonal curve that respects the embedding of $G$ and with the minimum number of bends, for $i=1,\dots,m$.

Suppose $\sigma$ orders the edges of $G \setminus H$ so that the edges between distinct connected components of $H$ precede edges between vertices in the same connected component of $H$. For such orderings Fowler {\em et al.}\ claimed in~\cite{fjks-crp-11} that there exists a drawing $\Gamma$ of $G$ greedily extending $\cal H$ with respect to $\sigma$ in which each edge has $O(|V(G)|)$ bends. However, in the following we confirm a claim of Schaefer~\cite{S13a} stating that greedy extensions do not, in general, lead to drawings with a polynomial number of bends.

\begin{theorem}\label{th:greedy}
For every $n$ there exists an $n$-vertex plane graph $G$, a planar drawing $\cal H$ of $H = (V(G), \emptyset)$, the empty spanning subgraph of $G$, and an order $\sigma$ of the edges in $G$ so that any drawing of $G$ that greedily extends $\cal H$ with respect to $\sigma$ has edges with $2^{\Omega(n)}$ bends.
\end{theorem}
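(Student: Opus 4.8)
The plan is to exhibit a single adversarial instance rather than to reason about all instances, and to observe first that the \emph{existence} of a good drawing is free. Since $H=(V(G),\emptyset)$ has no edges, our setting is exactly the fixed-vertex-location setting of Pach and Wenger~\cite{PW01} (the edgeless special case of Theorem~\ref{thm:PWG}), so for any embedding of $G$ there is a planar poly-line drawing with $O(n)$ bends per edge. Hence the whole content of Theorem~\ref{th:greedy} is a \emph{lower bound on the greedy process}: I must build $G$, a point set $\cal H$, and an order $\sigma$ for which the minimum-bend, embedding-respecting, non-crossing routing of each successive edge is forced to blow up. The exponential growth cannot come from the fixed points themselves (there are only $n$ of them, so they can force at most linearly much structure); it must come from the accumulation of the \emph{already drawn edges}, which serve as obstacles for the edges drawn later.

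The heart of the construction is a \emph{doubling gadget}. I would maintain the invariant that after $e_1,\dots,e_{i-1}$ are drawn, edge $e_{i-1}$ is a thin ``serpentine'' arc with $b_{i-1}$ bends lying in a known narrow corridor -- its coarse location pinned by a constant number of fixed points, even though its individual bends are at free positions. At stage $i$ I place $O(1)$ new fixed points and choose the endpoints of $e_i$ together with the embedding of $G$ so that the homotopy class forced on $e_i$ is: ``run alongside one side of $e_{i-1}$ to its far tip, make a U-turn around the tip, and run back along the other side.'' Because $e_i$ may not cross $e_{i-1}$ and is confined to the thin corridor hugging it (the confinement being enforced by the few new pins together with the earlier edges $e_1,\dots,e_{i-2}$ that $e_{i-1}$ itself hugged), both the outgoing and the returning leg must trace all $b_{i-1}$ wiggles of $e_{i-1}$. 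This gives the recurrence $b_i \ge 2b_{i-1}-c$ for a constant $c$, and makes $e_i$ itself a serpentine of $\approx 2b_{i-1}$ bends, ready to play the role of $e_{i-1}$ at stage $i+1$. Seeding the recursion with a small comb gadget forcing $b_1 > c$ and spending $O(1)$ vertices and edges per stage, I can run $\Theta(n)$ stages, so the final edge has $b_{\Theta(n)} = 2^{\Omega(n)}$ bends; the order $\sigma$ simply lists the edges in stage order.

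I expect the main obstacle to be the rigorous \emph{lower bound}, i.e.\ proving that the forced homotopy class of $e_i$ genuinely cannot be realized with fewer than $2b_{i-1}-O(1)$ bends -- that $e_i$ cannot ``shortcut'' the serpentine of $e_{i-1}$. The clean way to argue this is topological: fix a family of pairwise-disjoint ``cut'' arcs transverse to the corridor and show that every embedding-respecting, non-crossing realization of $e_i$ must cross these cuts a number of times proportional to $b_{i-1}$ on each of its two legs, while a single straight segment of a polyline can meet only a bounded number of cuts; this converts the hugging/winding requirement into a bend count. Two secondary points would be folded into the same induction: \emph{feasibility}, namely that the greedy process never paints itself into a corner -- I would design the gadget so that at each stage the minimum-bend curve is essentially forced and always extends to a complete drawing of $G$, using the Pach--Wenger drawing as a witness of completability -- and the precise \emph{confinement} argument that prevents $e_i$ from escaping to the outer face and skipping the return leg, which is exactly what yields the factor $2$ rather than a merely additive increase. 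This settles the claim of Schaefer~\cite{S13a} and refutes the bound asserted by Fowler et al.~\cite{fjks-crp-11}.
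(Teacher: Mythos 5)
Your proposal follows essentially the same route as the paper's proof: the paper adapts a nested example of Kratochv\'{\i}l and Matou\v{s}ek in which all vertices are pre-placed, a scaffold of straight edges is drawn first, and then edges $(u_1,v_1),\dots,(u_N,v_N)$ are drawn in order, each forced to bend around the endpoint $v_i$ of its predecessor so that its two halves each repeat the predecessor's winding, doubling the number of crossings with one fixed straight segment $ab$ --- exactly your doubling-gadget recurrence $b_i \ge 2b_{i-1}-O(1)$ with crossings against fixed transverse objects serving as the bend lower bound. One refinement worth taking from the paper: instead of a geometric ``narrow corridor'' invariant (fragile, because minimum-bend curves are not unique and their bends may lie anywhere in the face), the inductive step splits the new edge at its forced bend near $v_i$ and observes that each of the two resulting pieces is, after contracting endpoints, itself a valid connector for the previous pair, so the crossing count with $ab$ doubles by induction, purely topologically and independently of where greedy actually placed the earlier bends.
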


\begin{proof}
We adapt an example by Kratochv\' il and Matou\v sek~\cite{km-sgrer-91}. Refer to Figure~\ref{fig:exponential}. Let $N=\lfloor \frac{n}{3} \rfloor -6$, for any integer $n$. Graph $H$ consists of $n$ isolated vertices, name them $u_1,\dots,u_N, v_1,\dots,v_N,w_1,\dots,w_N,a,b,c,d,e,r_1,\dots,r_{n-3N-5}$. The first $n-N-1$ edges in $\sigma$ are $(u_i,w_i)$ for $i=1,\dots,N$, $(w_i,w_{i+1})$ for $i=1,\dots,N-1$, $(r_i,r_{i+1})$ for $i=1,\dots,n-3N-6$, $(c,w_1)$, $(b,c)$, $(c,e)$, $(e,d)$, $(a,d)$, and $(a,r_{n-3N-5})$. All these edges are straight-line segments in any drawing $\Gamma$ of $G$ that greedily extends $\cal H$ with respect to $\sigma$. The last $N$ edges in $\sigma$ are $(u_1,v_1),\dots,(u_N,v_N)$ in this order.

\begin{figure}[tb]
\centering
\includegraphics[scale=0.9]{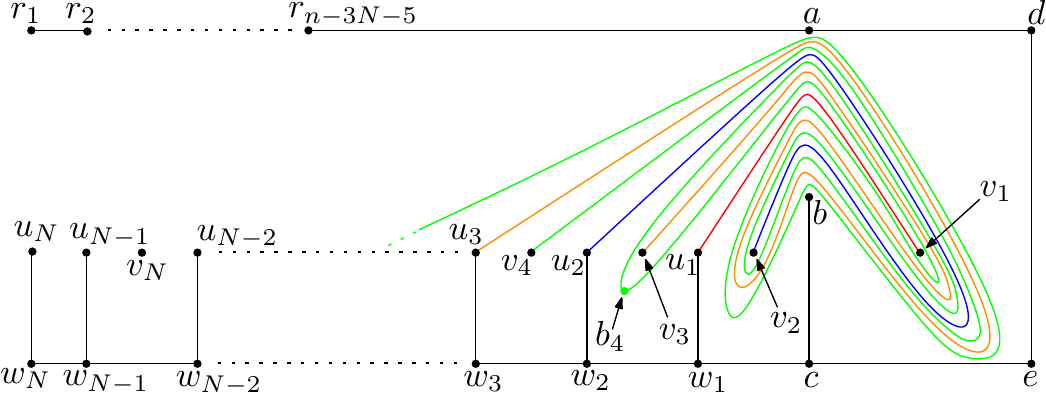}
\caption{A drawing $\Gamma$ of $G$ that greedily extends $\cal H$ with respect to $\sigma$. Drawing $\cal H$ consists of the black circles. The first edges $n-N-1$ edges in $\sigma$ are (black) straight-line segments. The last $N$ edges $(u_i,v_i)$ are (colored) polygonal lines whose bends have been made smooth to improve the readability. Only four of the latter edges are shown.}
\label{fig:exponential}
\end{figure}

Consider any drawing $\Gamma$ of $G$ that greedily extends $\cal H$ with respect to $\sigma$. We claim that edge $(u_i,v_i)$ has $2^{i-1}$ bends in $\Gamma$. In fact, it suffices to prove that $(u_i,v_i)$ has $2^{i-1}$ intersections with the straight-line segment $ab$ in $\Gamma$. Indeed, $(u_1,v_1)$ has exactly one intersection with $ab$ in $\Gamma$. Inductively assume that $(u_{i},v_{i})$ has $2^{i-1}$ intersections with $ab$ in $\Gamma$; we prove that $(u_{i+1},v_{i+1})$ has $2^{i}$ intersections with $ab$ in $\Gamma$. This proof is accomplished by following Kratochv\' il and Matou\v sek~\cite{km-sgrer-91} almost {\em verbatim}. Since $(u_{i+1},v_{i+1})$ does not cross $(u_{i},v_{i})$, it has a bend $b_{i+1}$ around $v_i$, i.e., inside the square defined by $u_{i-2}$, $w_{i-2}$, $w_{i-1}$, and $u_{i-1}$. Thus the polygonal curve representing $(u_{i+1},v_{i+1})$ in $\Gamma$ consists of two parts---one from $u_{i+1}$ to $b_{i+1}$, the other from $b_{i+1}$ to $v_{i+1}$. Both of these parts may be used as an edge joining $u_i$ and $v_i$, after contracting $u_{i+1}$ and $v_{i+1}$ into $u_i$, and $b_{i+1}$ into $v_i$. Hence, by induction, each of these two parts has $2^{i-1}$ intersections with $ab$, and the whole edge $(u_{i+1},v_{i+1})$ has $2^{i}$ intersections with~$ab$.

Hence, in any drawing $\Gamma$ of $G$ that greedily extends $\cal H$ with respect to $\sigma$, one edge has $2^{N-1}=2^{\lfloor \frac{n}{3} \rfloor -7}\in 2^{\Omega(n)}$ bends, which concludes the proof.
\end{proof}

We remark that the graph $G$ in the proof of Theorem~\ref{th:greedy} is a tree, so every edge of $G$ connects vertices in distinct connected components of $H$.

\section{Simultaneous Planarity}
\label{sec:S2}


Before turning to our algorithm to draw simultaneously planar graphs, we justify our claim that minimizing the number of crossings in a simultaneous planar drawing is \NP-hard. This result follows from Cabello and Mohar's proof of  \NP-hardness for the \emph{anchored planarity} problem~\cite[Theorem 2.1]{Cabello-Mohar}, but a more direct proof of a slightly stronger result is possible by reduction from the \NP-complete crossing number problem.

\begin{theorem}\label{thm:SPC}
 Minimizing the number of crossings in a simultaneous planar drawing of two graphs is \NP-complete, even if one graph is the disjoint union of paths of length at most two and the other graph is a matching.
\end{theorem}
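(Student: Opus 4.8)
To prove Theorem~\ref{thm:SPC}, the plan is to show membership in \NP\ and then \NP-hardness by a reduction from the \emph{crossing number} problem, which is \NP-complete even for cubic graphs (Hlin\v{e}n\'y). Membership is routine: I would certify a simultaneous drawing by guessing the rotation system of each graph together with the cyclic pattern of crossings along each private edge; since an optimal simultaneous drawing may be taken to be \emph{good}, the number of crossings (and hence the size of the certificate) is polynomial, and it is easy to verify that the guessed data describe a valid simultaneous planar drawing with at most $k$ crossings.

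For the reduction, let $G^*$ be a cubic graph whose crossing number we wish to determine. I would subdivide each edge of $G^*$ into a chain of $K=O(|E(G^*)|)$ segments and two-color the segments, putting the \emph{red} segments into $G_1$ and the \emph{blue} segments into $G_2$. The coloring is chosen so that (i)~no two consecutive segments are blue, (ii)~no three consecutive segments are red, and (iii)~at every vertex exactly two of the three incident segments are red and one is blue. Rule~(iii) is met by designating at each vertex one incident edge as its ``blue end'', and rules~(i)--(ii) are then satisfied along the interiors (an alternating pattern works). With this coloring $G_1$ is a disjoint union of paths of length at most two and $G_2$ is a matching, exactly as required. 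Note that $G_1$ and $G_2$ share only vertices, so they are trivially simultaneously planar, and in any simultaneous planar drawing every crossing occurs between a red (private $G_1$) segment and a blue (private $G_2$) segment, because $G_1$ and $G_2$ are each drawn crossing-free.

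The core of the argument is to prove that the minimum number of crossings over all simultaneous planar drawings of $G_1$ and $G_2$ equals $cr(G^*)$, and the ``$\le$'' direction is the main obstacle. Given an optimal good drawing of $G^*$, I would route each subdivided edge along the corresponding curve and, crucially, place its subdivision vertices so that at each crossing of two curves one curve lies on a red segment and the other on a blue segment. This is where rules~(i)--(ii) pay off: they forbid only the runs $BB$ and $RRR$, so both colors recur along every edge and the subdivision vertices can be slid to make each crossing fall on a segment of whichever color we prescribe; taking $K$ linear in $|E(G^*)|$ leaves enough segments to do this independently at all of an edge's $O(|E(G^*)|)$ crossings. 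Assigning, at each crossing, one edge the color red and the other blue then produces a drawing with no red--red or blue--blue crossings---hence a genuine simultaneous planar drawing---whose crossings are in bijection with those of the chosen drawing of $G^*$. For the ``$\ge$'' direction, I would start from any simultaneous drawing, concatenate the segments of each edge into a single curve to recover a drawing of $G^*$, and observe that the crossings between distinct recovered edges form a subset of the red--blue segment crossings; hence $cr(G^*)$ is at most the number of crossings of the simultaneous drawing. Combining the two bounds shows that $G^*$ has a drawing with at most $k$ crossings if and only if $G_1$ and $G_2$ admit a simultaneous planar drawing with at most $k$ crossings, and since $K$ is polynomial the construction runs in polynomial time, completing the reduction.
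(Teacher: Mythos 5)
Your proposal is correct and takes essentially the same approach as the paper: a reduction from the crossing number problem on cubic graphs, subdividing each edge into linearly many segments split between $G_1$ and $G_2$ (two red edges and one blue edge at each original vertex), and establishing equality of the two optima by sliding subdivision vertices so each crossing is realized as a red--blue crossing. The only cosmetic differences are that you handle endpoint compatibility with the relaxed rules (no $BB$, no $RRR$) where the paper uses a parity trick ($2m$ vs.\ $2m+1$ subdivisions to force strict alternation), and that you spell out \NP-membership, which the paper leaves implicit.
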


The result is sharp in the sense that if both $G_1$ and $G_2$ are matchings, the problem is easy, since the union of two matchings is always planar.

\begin{proof}
We use the fact that the (standard) crossing number problem is \NP-hard for cubic graphs~\cite{H06}. Let $G$ be a cubic graph with $m$ edges.
Subdivide each edge $2m$ or $2m+1$ times (we will shortly see which). At each of the original vertices of $G$ choose two of the incident edges, and make them part of $G_1$; the third edge at each vertex is added to $G_2$. Now add the remaining edges to $G_1$ and $G_2$ so that along each path between original vertices $G_1$ and $G_2$ edges alternate. If such a path ends with two $G_1$-edges or two $G_2$-edges, we need to subdivide it $2m$ times to make this possible; if it ends with one $G_1$-edge and one $G_2$-edge, we subdivide it $2m+1$ times. By this construction, $G_1$ is a disjoint union of paths of length at most two, and $G_2$ is a matching. Finally, the number of crossings in a simultaneous planar drawing of $G_1$ and $G_2$ is an upper bound on the crossing number of $G$, and, since we subdivided each edge of $G$ sufficiently often, the two numbers are equal: starting with a crossing-minimal drawing of $G$, we can realize each crossing by aligning a $G_1$-edge with a $G_2$-edge.
\end{proof}


We now turn to the proof the Theorem~\ref{thm:sim-draw}.

\begin{proofof}{of Theorem~\ref{thm:sim-draw}}
We first note that it is easy to go from $(ii)$ to $(i)$: Suppose
we have constructed, in time $O(n^2)$ a simultaneous planar drawing $\Gamma$ so that private edges of $G_1$ and $G_2$ intersect at most $24$ times, all edges of $G_1$ are straight, and all private edges of $G_2$ have at most $72 |V(G_1)|$ bends. We add dummy
vertices at the locations of the $O(n^2)$ crossings points in $\Gamma$, and then construct a straight-line drawing of the resulting planar graph on a small grid. The number of bends per edge in the new drawing is at most $72n$, since each edge in $\Gamma$ intersects fewer than $3n$ edges, and each one of them at most $24$ times.

We are left with the proof of $(ii)$. That is, we have to construct in time $O(n^2)$ a simultaneous planar drawing of $G$ in which private edges of $G_1$ and $G_2$ intersect at most $24$ times, all edges of $G_1$ are straight, and every private edge of $G_2$ has at most $72|V(G_1)|$ bends.

Start with an arbitrary straight-line planar drawing $\Gamma_1$ of $G_1$. We now construct a drawing $\Gamma_2$ of $G_2$ using an approach similar to the proof of Theorem~\ref{thm:PWG}. Drawing $\Gamma_1$ induces a straight-line planar drawing $\Gamma$ of $G$. Thus, in order to determine $\Gamma_2$, it remains to describe how to draw the private edges of $G_2$. We will accomplish this independently for each face $F$ of $G$.

We construct a triangulation $\Sigma$ of $F$ by using all the vertices and edges of $G_1$ that lie inside $F$, as well as some extra edges we will specify shortly. Next, we execute the same algorithm we used in the proof of Theorem~\ref{thm:sim-draw}.  Namely, we construct a straight-line drawing of the dual $D$ of $\Sigma$ and we take a spanning tree $T$ of $D$. For each facial walk $W_i$ of $F$, we augment $T$ with a leaf  $v_i$ close to $W_i$ and inside $F'$, if $|W_i| > 1$, and coinciding with $W_i$, if $|W_i| = 1$; here, $F'$ is an inner $\varepsilon$-approximation of $F$ constructed as earlier. Let $G^F_2$ be the embedded multigraph obtained by restricting $G_2$ to the vertices and edges inside or on the boundary of $F$, and by contracting each facial walk $W_i$ of $F$ to a single vertex $v_i$. We use Lemma~\ref{cor:Te} to construct a planar poly-line drawing of $G^F_2$ that realizes the given embedding, that is $\varepsilon$-close to $T$, and in which vertices $v_i$ maintain their fixed locations. Finally, for boundary components with $|W_i| > 1$, we reconnect edges in $G^F_2$ to the boundary components they belong to. In order to do this, we first ``wrap'' the edges of $G^F_2$ passing by a vertex $v_i$ around $W_i$, and we then extend the edges of $G^F_2$ incident to $v_i$ to their endpoint on $W_i$, by routing them around $W_i$.

By construction every edge of $G_1$ is straight. By Theorem~\ref{thm:PWG} every private edge of $G_2$ has at most $72 |V(G_1))|$ bends. Also, the algorithmic steps are the same as for the proof of Theorem~\ref{thm:PWG}, hence the algorithm runs in $O(n^2)$ time. It remains to prove that any private edge of $G_1$ and any private edge of $G_2$ intersect at most $24$~times.

Consider any private edge $e$ of $G_2$ and any private edge $e'$ of $G_1$. Recall that $e'$ is an edge of $\Sigma$. Denote by $W_i$ and $W_j$ the facial walks that the end-vertices of $e'$ belong to.
Edge $e$ can only intersect edge $e'$ in the following two situations:
when passing by $v_i$ or $v_j$ and when passing by the point $p_T$ in which the edge of $D$ dual to $e'$ crosses $e'$. We prove that each of these two types of intersections happens at most $12$ times.

For the first type of intersections, Lemma~\ref{cor:Te} implies that edge $e$ passes by each of $v_i$ or $v_j$ at most $6$ times, hence at most $12$ times in total.

For the second type of intersections, Lemma~\ref{lem:PWHam} implies that edge $e$ is subdivided into at most three edges $e_1$, $e_2$, and $e_3$ in order to turn $G^F_2$ into a Hamiltonian graph. For each $j=1,2,3$, $e_j$ either belongs to the Hamiltonian cycle of the subdivided $G^F_2$ or not. In the former case, $e_j$ is drawn as part of an $i \varepsilon/n$-approximation $\Theta_i$ of $T$, as in the proof of Lemma~\ref{lem:HCemb}, hence it crosses $e'$ at most twice. In the latter case, $e_j$ is composed of two parts, denoted by $\Delta'_{p,k}$ and $\Delta'_{q,k}$, or by $\Delta''_{p,k}$ and $\Delta''_{q,k}$ in the proof of Lemma~\ref{lem:Te}. Each of $\Delta'_{p,k}$, $\Delta'_{q,k}$, $\Delta''_{p,k}$ and $\Delta''_{q,k}$ is part of a $k\varepsilon/(nm_1)$-approximation of $\Theta'_i$, which is part of $\Theta_i$. Hence, each of $\Delta'_{p,k}$, $\Delta'_{q,k}$, $\Delta''_{p,k}$ and $\Delta''_{q,k}$ crosses $e'$ at most twice; thus $e_j$ crosses $e'$ at most four times, and $e$ crosses $e'$ close to $p_T$ at most $12$ times.
\end{proofof}

\section{Open Questions}

We conclude with three open questions.
We proved that if a graph has a planar drawing extending a straight-line planar drawing of a subgraph then there is such a drawing with at most $72n$ bends per edge.  This is asymptotically tight, but can the constant $72$ be reduced? As sketched at the end of Section~\ref{sec:PEG}, a variation of our algorithm decreases this constant to $63$, however new ideas seem to be needed in order to push the bound further down.

Our second result was that any two simultaneously planar graphs have a simultaneous planar drawing with at most $24$ crossings per pair of edges and a linear number of bends per edge with a drawing on a polynomial-sized grid.  The only lower bound on the number of crossings between two edges in a simultaneous planar drawing is 2 (see~\cite{CJS08} or the figure in the margin for the entry ``simultaneous crossing number'' in~\cite{S13b}).  There is a large gap between $2$ and $24$.  Can two edges be forced to cross more than twice in a simultaneous planar drawing?
For the third open question, we note that
Grilli et al.~\cite{Grilli2014} showed that two simultaneously planar graphs have a drawing with at most $9$ bends per edge, though with a larger constant for the number of crossings and not on a grid.  Is it possible to achieve the best of both results:  $9$ bends per edge, $24$ crossings per pair of edges, and a nice grid?


\bibliographystyle{abbrvurl}
\bibliography{./SEFE2}

\end{document}